\newcommand{\vect}[1]{\ensuremath{\boldsymbol{\mathrm{#1}}}}
\newtheorem{theorem}{Theorem}
\newtheorem{Proposition}{Proposition}
\newtheorem{Corollary}{Corollary}
\newtheorem{Lemma}{Lemma}
\newtheorem{Remark}{Remark}
\newtheorem{Assumption}{Assumption}
\DeclareMathOperator*{\argmin}{arg\,min}
\title{\LARGE \bf
Distributed Model Predictive Control Design for Multi-agent Systems via Bayesian Optimization}
\author{Hossein Nejatbakhsh Esfahani, Kai Liu, Javad Mohammadpour Velni
\thanks{*This work was supported by the US National Science Foundation under award \#2302219.}
\thanks{H. N. Esfahani and J. M. Velni are with the Department of Mechanical Engineering, Clemson University, Clemson, SC, USA. K. Liu is with the School of Computing, Computer Science Division, Clemson University, Clemson, SC, USA.
		{\tt\small \{hnejatb, kail, javadm\}@clemson.edu%
		}.
}}
\begin{document}

\maketitle

\begin{abstract}

This paper introduces a new approach that leverages Multi-agent Bayesian Optimization (MABO) to design Distributed Model Predictive Control (DMPC) schemes for multi-agent systems. The primary objective is to learn optimal DMPC schemes even when local model predictive controllers rely on imperfect local models. The proposed method invokes a dual decomposition-based distributed optimization framework, incorporating an Alternating Direction Method of Multipliers (ADMM)-based MABO algorithm to enable coordinated learning of parameterized DMPC schemes. This enhances the closed-loop performance of local controllers, despite discrepancies between their models and the actual multi-agent system dynamics. In addition to the newly proposed algorithms, this work also provides rigorous proofs establishing the optimality and convergence of the underlying learning method. Finally, numerical examples are given to demonstrate the efficacy of the proposed MABO-based learning approach.

\end{abstract}


\section{Introduction}

Networked systems have grown significantly in scale and complexity, leading to increasingly demanding requirements for model-based control accuracy and real-time performance. For such large-scale network systems, complex optimization algorithms are commonly employed to perform desired tasks while minimizing a priori defined objective functions and satisfying operational constraints \cite{YANG2019278}. Optimization-based control of networked systems then remains a challenging problem when a complex model of multi-agent system is exploited within the optimization framework. Although simplified models can reduce the complexity of the model-based optimization algorithms, the inaccuracies can significantly degrade the performance of multi-agent control system.  

A common approach to managing large-scale networked systems is to design local controllers that neglect interactions between the subsystems involved. However, this often results in an overall degradation in the system performance. Although centralized control could achieve better global performance, it is generally impractical due to limitations in communication, the complexity of nonlinear systems, and the large number of decision variables \cite{Bemporad2010}. To address these challenges, extensive research has focused on structured control systems, including decentralized and distributed architectures. However, the primary challenge in a decentralized architecture lies in the lack of communication between controllers, which can lead to degraded closed-loop performance and, in some instances, even result in instability. In contrast, a distributed control system allows communication between controllers, enabling better coordination and potentially improved performance \cite{6853439}.

In the context of distributed control systems, distributed MPC (DMPC) is a well-researched optimal control approach that can handle interconnected systems and multi-variable interactions. Dual decomposition and Alternating Direction Method of Multipliers (ADMM) are two efficient methods for solving DMPC problems, where a coupled constraint between agents can be formulated as a dual problem \cite{Farokhi2014,dual-DMPC}. However, inaccurate local MPC models adversely affect the performance of DMPC. It can also be challenging to choose parameters for the cost and constraint functions for each agent. Machine learning-based approaches have recently emerged to address the aforementioned issue by enabling data-driven model learning for MPC \cite{8909368,9867259}. However, closed-loop control performance is not directly related to the model fitting, so that the control objectives may not be satisfied even if the learned model can accurately capture the real plant. To address this issue, the methods proposed in \cite{ntnu2024rlcontrol,dataDeriven2019,10808167, 10542325,10644368} established tools to fuse Markov Decision Process (MDP) and MPC so that an MPC scheme can deliver the same optimal policy as MDP by modifying the terminal and stage cost functions of the MPC. Then, several Reinforcement Learning (RL)-based methods were developed to learn the corresponding MPC cost functions aimed at improving closed-loop performance. It was shown that parameterizing the cost function can offer an alternative approach to model learning, motivated by the inherent connection between model-based predictions and the MPC cost function.

The integration of Bayesian Optimization (BO) with MPC has recently introduced a powerful paradigm for addressing complex control problems in various domains. By enabling adaptive, data-driven, and efficient control strategies, this combination holds the potential for many applications ranging from robotics to energy systems and beyond \cite{letham2019bayesian,gao2020energy,maheshwari2022bayesian}. BO is a probabilistic optimization approach designed for black-box functions that are expensive to evaluate. The methodology revolves around the construction of a surrogate model, often a Gaussian Process (GP), to approximate the target function \cite{7352306}. Appropriate fusion of BO with MPC leverages the strengths of both; while BO has the ability to optimize black-box functions, MPC offers predictive control capabilities. MPC performance hinges on the proper selection of parameters such as prediction horizon, weighting matrices, and constraints. BO then enables systematic tuning by considering the closed-loop system performance as a black-box objective function \cite{chu2020parameter,balakrishna2021learning}. The BO’s surrogate modeling can also be used to capture and update system uncertainties in real time, enabling adaptive MPC strategies for nonlinear or time-varying systems \cite{ong2017gaussian}. Authors in \cite{HIRT2024208} proposed a safe and stability-informed BO for MPC cost function learning, in which a parameterized MPC scheme is safely adjusted to deliver the best closed-loop performance in the presence of a model mismatch between the MPC model and the real plant. A high-dimensional BO framework for sample-efficient MPC tuning was also proposed recently in \cite{KUDVA2024458}.

Despite recent efforts in utilizing BO for control design purposes, in the context of multi-agent systems, the combined BO-MPC control approaches above do not account for the coupling and the interactions between different agents, i.e., through a DMPC scheme. To address that, in this paper, we propose to learn a DMPC scheme using a Multi-Agent BO (MABO) framework aiming at improving the optimal closed-loop performance for each local MPC scheme in the presence of model mismatch. We formulate the fusion of DMPC and multi-agent Markov Decision Processes (MDPs) so that the local MPC schemes obtained from a dual-decomposition method can capture the local value functions associated with the multi-agent MDP. In the proposed learning-based DMPC, we first show how the local cost functions associated with the local MPC schemes can be modified to deliver local optimal policies. We then propose to practically learn a parameterized DMPC by a coordinated learning mechanism in the ADMM-based MABO framework.

\textbf{Major contributions of this work are as follows:}
\begin{itemize}
    \item \textbf{{We propose a modified DMPC scheme and demonstrate its ability to capture the true multi-agent MDP, even when the local MPC schemes, formulated via the dual decomposition method, are based on imperfect models of the real multi-agent system. We provide formal proofs to establish that the modified DMPC scheme serves as a valid approximator for the local optimal policies.}}
    \item \textbf{{We introduce a parametric variant of the modified DMPC scheme and develop an ADMM-based MABO algorithm to enable coordinated learning of the parameterized local MPC schemes.}}
    \item \textbf{{We conduct a rigorous convergence analysis of the proposed coordination-based distributed learning algorithm, supported by formal proofs of key properties of the acquisition function employed in the ADMM-based MABO method.}}   
\end{itemize}

This paper is organized as follows. In Section \ref{sec:2}, a distributed MPC scheme based on a dynamic dual-decomposition method is described. The fusion of DMPC and multi-agent MDPs is detailed in Section \ref{sec:3}. In Section \ref{sec:4}, we provide the proposed coordinated BO algorithm for learning the DMPC schemes. Numerical examples are then given in Section \ref{sec:5} to show the accuracy, efficacy, and performance of the proposed MABO-DMPC. Concluding remarks are finally given in Section \ref{sec:6}. 



\section{Preliminaries and Problem Setup}\label{sec:2}

In this section, we first provide the formulation for the centralized control problem and then describe the use of dual decomposition to break the centralized problem into local (distributed) optimization problems. 
We consider a distributed networked optimal control system comprising a set of $\mathcal{M}$ interconnected dynamical subsystems, each denoted as $\mathcal{M} = \{\Sigma_1, \dots, \Sigma_{|\mathcal{M}|}\}$, where $m:=|\mathcal{M}|$ represents the cardinality of the set $\mathcal{M}$. For each subsystem $\Sigma_i$, we define subsystem $\Sigma_j$ as a neighbor of $\Sigma_i$ if the two subsystems are subject to coupled constraints and/or a coupled cost function. We also consider a Distributed Control System (DCS) defined by a graph $\mathcal{G}(\mathcal{M}, \mathcal{E}_c)$, consisting of a set of nodes $\mathcal{M}$ corresponding to the subsystems and edges $\mathcal{E}_c$ representing the interconnections of the subsystems through a coupling cost function. The set $\mathcal{M}_i := \{\Sigma_j \,|\, (i, j) \in \mathcal{E}_c\text{ or } (j, i) \in \mathcal{E}_c, \, i \neq j\}$ then represents the set of subsystems $\Sigma_j$ that are interconnected with $\Sigma_i$. 

\subsection{Centralized Control Problem}

Let the states and inputs of the agent $i$ be denoted by $\vect x_i$ and $\vect u_i$, respectively. We consider a deterministic model of each agent as $\vect{x}_i^{k+1} = \vect{f}_i(\vect{x}_i^k,\,\vect{u}_i^k)$, where $\vect x_i^k\in\mathbb{R}^{n_{x_i}},\vect u_i^k\in\mathbb{R}^{n_{u_i}}$. Let $\vect x^k=\mathrm{col}\left\{\vect x_1^k,\ldots,\vect x_m^k\right\}$ and $\vect u^k=\mathrm{col}\left\{\vect u_1^k,\cdots,\vect u_m^k\right\}$ be the augmented state and control input vectors of multi-agent system, respectively. The corresponding dimensions then are $\vect x^k\in\mathbb{R}^{n_x},\vect u^k\in\mathbb{R}^{n_u}$, $n_x=\sum_{i=1}^m n_{x_{i}},n_u=\sum_{i=1}^m n_{u_{i}}$. A networked control scheme can be based on a centralized optimization problem or a set of local problems that need to be solved at each time instant. The centralized optimization problem is formulated as
\begin{subequations}\label{eq:centralized}
		\begin{align}
		&\min_{\hat{\vect x},\hat {\vect u}} \sum_{i=1}^{m}\Bigg\{T_i\left(\hat {\vect x}_i^{k+N},\hat {\vect w}_i^{k+N}\right)+\sum_{\ell=k}^{k+N-1}l_i\left(\hat {\vect x}_i^{\ell},\hat {\vect w}_i^{\ell},\hat {\vect u}_i^{\ell}\right)\Bigg\} \\
			\mathrm{s.t.}&\quad \hat {\vect x}_i^{\ell+1}=\vect f_i\left(\hat {\vect x}_i^{\ell},\hat {\vect u}_i^{\ell}\right),\quad \hat {\vect x}_i^{k}=\vect s_i^{k},\\
			&\quad \vect h_i\left(\hat {\vect x}_i^{\ell},\hat {\vect u}_i^{\ell}\right)\leq 0,\quad\vect h_i\left(\hat {\vect x}_i^{k+N}\right)\leq 0,\\
			&\quad \hat {\vect w}_i^{\ell}=W_{ij}\left(\hat {\vect x}_j^{\ell}\right),\quad \hat {\vect w}_i^{k+N}=W_{ij}\left(\hat {\vect x}_j^{k+N}\right),\label{eq:coupling}
		\end{align}
\end{subequations}
where $W_{ij}\left(\hat {\vect x}_j^{\ell}\right)_{(j,i)\in\mathcal{E}_c}\in\mathbb{R}^{\Sigma_{j}n_j}$ denotes the tuple of the state vector of all subsystems that influence subsystem $i$, $N$ is the prediction horizon, and $T_i$, $l_i$, $h_i$, and $g_i$ denote the respective terminal cost, stage cost, mixed inequality constraint, and input inequality constraint for agent $i$, respectively.

\begin{Assumption}\label{assump0}
In this paper, we assume that the subsystems in the multi-agent system are of the same order (state-space dimension) while their dynamics can be different, yielding a heterogeneous multi-agent system.
\end{Assumption}
\begin{Remark}\label{rem_1}
In this paper, we consider multi-agent systems with coupled subsystems such that each agent $\Sigma_i$ is affected by all other agents $\mathcal{M}_i$. According to Assumption \ref{assump0}, we then have that $\hat {\vect w}_i^{\ell}\in\mathbb{R}^{(m-1)n_{x_i}}$ in \eqref{eq:coupling}.
\end{Remark}
Solving \eqref{eq:centralized} gives a sequence of optimal input predictions and corresponding state predictions as
\begin{align}
    &\vect{\hat{u}}^\star = \{(\vect{\hat{u}}_1^{k:k+N-1})^\star,\ldots, (\vect{\hat{u}}_m^{k:k+N-1})^\star\},\\\nonumber
    &\vect{\hat{x}}^\star = \{(\vect{\hat{x}}_1^{k:k+N})^\star,\ldots, (\vect{\hat{x}}_m^{k:k+N})^\star\},
\end{align}
where the first element $(\vect{\hat{u}}_i^k)^\star$ of the input sequence $\vect{\hat{u}}_i^\star$ is applied to each agent. At each physical time instant $k$, a new state $\vect{x}_i^k$ is received, and problem \eqref{eq:centralized} is solved again, producing a new $\vect{\hat{u}}_i^\star$ and $(\vect{\hat{u}}_i^k)^\star$ for each agent. However, repeatedly solving the centralized MPC problem $\eqref{eq:centralized}$ can fail for large-scale systems where the communication bandwidth is restricted. To address this issue, we use a Distributed MPC (DMPC) scheme based on dual decomposition. Next, we show how to modify \eqref{eq:centralized} to arrive at a fully distributed formulation.

\subsection{Dynamic Dual Decomposition}

Taking into account the coupling constraints \eqref{eq:coupling} of the centralized optimization problem \eqref{eq:centralized}, one can introduce slack variable $\bar {\vect w}_i^{\ell}$ capturing the effect of other agents on the agent $i$ through $W_{ij}\left(\hat {\vect x}_j^{k+\ell}\right)_{(j,i)\in\mathcal{E}_c}$. The centralized optimization scheme can then be reformulated as
\begin{subequations}\label{eq:centralized_new}
		\begin{align}
		&\max_{\vect\mu}\min_{\hat{\vect x},\hat {\vect u},\bar{\vect w}} \sum_{i=1}^{m}\Bigg\{T_i^{\vect\mu}\left(\hat {\vect x}_i^{k+N},\bar {\vect w}_i^{k+N}\right)\\\nonumber
        &\qquad\qquad\qquad\qquad+\sum_{\ell=k}^{k+N-1}L_i^{\vect\mu}\left(\hat {\vect x}_i^{\ell},\hat {\vect u}_i^{\ell},\bar{\vect w}_i^{\ell}\right)\Bigg\}\\
			\mathrm{s.t.}&\quad \hat {\vect x}_i^{\ell+1}=\vect f_i\left(\hat {\vect x}_i^{\ell},\hat {\vect u}_i^{\ell}\right),\quad \hat {\vect x}_i^{k}=\vect s_i^{k},\\
			&\quad \vect h_i\left(\hat {\vect x}_i^{\ell},\hat {\vect u}_i^{\ell}\right)\leq 0,\quad\vect h_i\left(\hat {\vect x}_i^{k+N}\right)\leq 0,
		\end{align}
\end{subequations}
where
\begin{subequations}
\begin{align}
    &T_i^{\vect\mu}=T_i\left(\hat {\vect x}_i^{k+N},\bar {\vect w}_i^{k+N}\right)\\\nonumber
&+\left(\vect\mu_i^{k+N}\right)^\top\bar{\vect w}_i^{k+N}-\sum_{j=1,j\neq i}^{m}\left(\vect\mu_{ji}^{k+N}\right)^\top W_{ji}\left(\hat {\vect x}_i^{k+N}\right),\\
    &L_i^{\vect\mu}=l_i\left(\hat {\vect x}_i^{\ell},\hat {\vect u}_i^{\ell},\bar{\vect w}_i^{\ell}\right)\\\nonumber
&+\left(\vect\mu_i^{\ell}\right)^\top\bar{\vect w}_i^{\ell}-\sum_{j=1,j\neq i}^{m}\left(\vect\mu_{ji}^{\ell}\right)^\top W_{ji}\left(\hat {\vect x}_i^{\ell}\right),
\end{align}
\end{subequations}
and $\vect\mu_i^{\ell}=\left(\vect\mu_{ij}^\ell\right)_{(j,i)\in\mathcal{E}_c}\in\mathbb{R}^{\Sigma_{j}n_j}$. The local MPC problem is then formulated as
\begin{subequations}\label{eq:DMPC}
		\begin{align}
		    &\left(\hat {\vect x}_i^{(I)}[k:k+N],\hat {\vect u}_i^{(I)}[k:k+N-1],\bar {\vect w}_i^{(I)}[k:k+N]\right)\nonumber\\
		    &\qquad=\argmin_{\hat {\vect x}_i,\hat {\vect u}_i,\bar {\vect w}_i}~~ \Bigg\{T_i^{\vect\mu^{(I)}}\left(\hat {\vect x}_i^{k+N},\bar {\vect w}_i^{k+N}\right)\\\nonumber
        &\qquad\qquad\qquad\qquad\qquad+\sum_{\ell=k}^{k+N-1}L_i^{\vect\mu^{(I)}}\left(\hat {\vect x}_i^{\ell},\hat {\vect u}_i^{\ell},\bar{\vect w}_i^{\ell}\right)\Bigg\}  \\
			&\mathrm{s.t.}\quad \hat {\vect x}_i^{\ell+1}=\vect f_i\left(\hat {\vect x}_i^{\ell},\hat {\vect u}_i^{\ell}\right),\quad \hat {\vect x}_i^{k}=\vect s_i^{k},\\
			&\quad \vect h_i\left(\hat {\vect x}_i^{\ell},\hat {\vect u}_i^{\ell}\right)\leq 0,\quad\vect h_i\left(\hat {\vect x}_i^{k+N}\right)\leq 0,
		\end{align}
\end{subequations}
and the local multipliers are updated as
\begin{align}\label{eq:update_mu}
    \vect\mu_{i}^{(I+1)}=\vect\mu_{i}^{(I)}+\beta_i\left(\bar{\vect w}_{i}^{(I)}-\hat {\vect w}_i^{(I)}\right),
\end{align}
where $\beta_i>0$ is the step size. The stopping criteria are then met ($\vect\mu_i^{I}\rightarrow\vect\mu_i^\star$) if either the number of iterations exceeds $ I\geq I_{\text{max}}$ or one of the following conditions is satisfied:
\begin{align}
 &\left \|\vect\mu_{i}^{(I)}-\vect\mu_{i}^{(I-1)}\right \|_2<\epsilon_1,\quad\left \|\bar{\vect w}_{i}^{(I)}-\bar{\vect w}_{i}^{(I-1)}\right \|_2<\epsilon_2,
\end{align}
for some positive threshold $\epsilon_1,\epsilon_2$. It is worth noting that either one of the two conditions above can be used as a stopping criterion if local MPC models can accurately capture the real subsystems. Consequently, we denote by $\hat{\vect\mu}_i^\star$ the local multipliers ($\vect\mu_i^\star$ when converged) when coping with imperfect MPC models.


\section{Fusion of DMPC and MDP}\label{sec:3}

Let $\vect s_i\in\mathcal{S}_i$ and $\vect a_i\in\mathcal{A}_i$ denote, respectively, the state and action assigned to the agent $i$ where $\mathcal{S}_i$ is the local state space and $\mathcal{A}_i$ is the local action space. We further denote $\vect s=\mathrm{col}\left\{\vect s_1,\cdots,\vect s_m\right\}\in\mathcal{S}$ as the state of the multi-agent system. Similarly, the action of the whole system is defined as $\vect a=\mathrm{col}\left\{\vect a_1,\cdots,\vect a_m\right\}\in\mathcal{A}$ while the joint (global) policy is $\vect\pi=\mathrm{col}\left\{\vect\pi_1,\cdots,\vect\pi_m\right\}$ with the policy of agent $i$ denoted as $\vect\pi_i$. Let us consider $m$ agents acting in a discounted multi-agent Markov Decision Process (MDP), which is defined by a tuple $\left\{\mathcal{S},\mathcal{A},L,\mathbb{P},\gamma\right\}$ where $L(\vect s^k,\vect a^k)$ is the global cost function, $\mathbb{P}[\vect s^{k+1}|\vect s^k,\vect a^k]$ is the true transition model, and $\gamma\in(0,1]$ is the discount factor.

\subsection{Description of the Multi-agent MDP}

In this paper, we consider a class of multi-agent systems in which each agent aims to contribute to the whole system by minimizing a coupling cost $l_i^c$ while its own task to minimize the local cost $l_i$ is preserved. The total cost assigned to each agent is then defined as
\begin{align}
     L_i\left(\vect{\tilde s}^k,\vect a_i^k\right)=l_i\left(\vect s_i^k,\vect a_i^k\right)+l_i^c\left(\vect{\tilde s}^k\right),
\end{align}
where the augmented state $\vect{\tilde s}^k$ is defined as $\vect{\tilde s}^k=\left[\vect s_i^k,W_{ij}\left(\vect s_j^k\right)\right]^\top$. Notice that the coupling cost $l_i^c$ is defined such that the effect of the coupling constraint \eqref{eq:coupling} on the centralized problem \eqref{eq:centralized} can be captured during the learning process, as detailed in Section \ref{sec:4}. From the perspective of each agent $i$, an MDP is then defined by a tuple $\left\{\mathcal{S},\mathcal{A}_i,L_i,\mathbb{P}_i,\gamma\right\}$ with the following transition model 
\begin{align}
    \mathbb{P}_i[\vect{\tilde s}^{k+1}|\vect{\tilde s}^{k},\vect a_i]:=\mathbb{P}\Big[\vect{\tilde s}^{k+1}|\vect{\tilde s}^k,\vect a_i^k,\vect\pi_{j\in\mathcal{M}_i}\left(\vect{\tilde s}^{k}\right)\Big].
\end{align}
 Let us define the value function of the multi-agent system as
\begin{align}\label{eq:central_V}
  V^{\vect\pi}\left(\vect{\tilde s}^k\right)=\mathbb{E}\left[\sum_{\ell=k}^\infty\gamma^{\ell-k} L\left(\vect{\tilde s}^\ell,\vect a^\ell\right)\Bigg|\vect a^\ell=\vect \pi(\vect{\tilde s}^\ell)\right],
\end{align} 
where the global stage cost function $L$ reads as
\begin{align}
     L\left(\vect{\tilde s}^\ell,\vect a^\ell\right)=\sum_{i=1}^m L_i\left(\vect{\tilde s}^\ell,\vect a_i^\ell\right).
\end{align}
The expectation $\mathbb{E}$ in \eqref{eq:central_V} is taken over the distribution of the multi-agent Markov chain resulting from the closed-loop system with the joint policy ${\vect{ \pi}}$. 
\begin{Remark}\label{Value_decomposition}
To decompose the multi-agent value function \eqref{eq:central_V}, one can consider a linear value-decomposition as \cite{DBLP}
\begin{align}\label{eq:v_decom}
    &V^{\vect\pi}\left(\vect{\tilde s}^k\right)=\mathbb{E}\left[\sum_{\ell=k}^\infty\gamma^{\ell-k} L_1\left(\vect{\tilde s}^\ell,\vect a_1^\ell\right)\Bigg|\vect a^\ell=\vect \pi(\vect{\tilde s}^\ell)\right]\\\nonumber
    &\qquad\quad+\mathbb{E}\left[\sum_{\ell=k}^\infty\gamma^{\ell-k} L_2\left(\vect{\tilde s}^\ell,\vect a_2^\ell\right)\Bigg|\vect a^\ell=\vect \pi(\vect{\tilde s}^\ell)\right]\\\nonumber
    &\qquad\quad+\cdots+\mathbb{E}\left[\sum_{\ell=k}^\infty\gamma^{\ell-k} L_m\left(\vect{\tilde s}^\ell,\vect a_m^\ell\right)\Bigg|\vect a^\ell=\vect \pi(\vect{\tilde s}^\ell)\right]\\\nonumber
    &\qquad=: V_1^{\vect\pi}\left(\vect{\tilde s}^k\right)+V_2^{\vect\pi}\left(\vect{\tilde s}^k\right)+\cdots+V_m^{\vect\pi}\left(\vect{\tilde s}^k\right)\\\nonumber
    &\qquad\qquad\qquad\qquad\qquad\qquad\qquad\qquad=\sum_{i=1}^m V_i^{\vect\pi}\left(\vect{\tilde s}^k\right).
\end{align}
\end{Remark}

\subsection{Modifying an Approximate Multi-agent MDP}

Considering the value decomposition in \eqref{eq:v_decom}, the $N$-step local value function then is
\begin{align}
    &V_i^{N,\vect\pi}\left(\vect{\tilde s}^k\right)=\\\nonumber
    &\mathbb{E}\left[\gamma^N V_i^f\left(\vect {\tilde s}^{k+N}\right)+\sum_{\ell=k}^{k+N-1}\gamma^{\ell-k} L_i\left(\vect {\tilde s}^\ell,\vect \pi_i\left(\vect {\tilde s}^\ell\right)\right)\right],
\end{align}
where $\vect{\tilde s}^k=\left[\vect s_i^k,W_{ij}\left(\vect s_j^k\right)\right]^\top$. By modifying the local terminal cost $V_i^f$ and the local stage cost $L_i$, we next show that the value function above can capture the local value function $V_i^{\vect\pi}\left(\vect{\tilde s}^k\right)$ even if it is constructed based on an $N$-step definition, and trajectories are imperfectly generated based on an inaccurate transition model $\hat{\mathbb{P}}_i[\vect{\hat s}^{k+1}|\vect s^k,\vect a_i^k]$.

\begin{Assumption}\label{assump1}
We assume that the following set is non-empty:
\begin{align}
    &\Xi=:\\\nonumber
    &\left\{\vect{\hat s}\in\mathcal{S}\Bigg|\left|\mathbb{E}\left[V_i^{\vect\pi}\left(\vect{\hat s}^\ell\right)\right]\right|<\infty,\forall \ell\in\mathbb{N}_{\geq k},\forall i\in \left\{1,\ldots,m\right\}\right\}
\end{align}
where $\vect{\hat s}^\ell=\left[\vect {\hat s}_i^\ell,W_{ij}\left(\vect {\hat s}_j^\ell\right)\right]^\top$ and $\mathbb{N}_{\geq k}$ denotes the set of natural numbers greater than $k-1$.
\end{Assumption}

\begin{theorem}\label{theorem:MPC-V}
Under Assumption \ref{assump1}, one can find a modified local terminal cost $\hat{V}_i^f$ and modified local stage cost $\hat L_i$ such that
\begin{align}\label{eq::modif-value}
    &\hat V_i^{N,\vect\pi}\left(\vect{\tilde s}^k\right)=\\\nonumber
    &\mathbb{E}\left[\gamma^N \hat V_i^f\left(\vect{\hat s}^{k+N}\right)+\sum_{\ell=k}^{k+N-1}\gamma^{\ell-k} \hat L_i\left(\vect{\hat s}^\ell,\vect \pi_i\left(\vect{\hat s}^\ell\right)\right)\right]
\end{align}
delivers the same local value function as one associated with the true multi-agent MDP on $\Xi$ and for any $N$:
\begin{align}
    \hat V_i^{N,\vect\pi}\left(\vect{\tilde s}^k\right)=V_i^{\vect\pi}\left(\vect{\tilde s}^k\right).
\end{align}
\end{theorem}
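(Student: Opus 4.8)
The plan is to reduce the right-hand side of \eqref{eq::modif-value} to a telescoping sum by ``shaping'' the local stage and terminal costs with the \emph{true} local value function $V_i^{\vect\pi}$, in the spirit of the MDP--MPC equivalence results of \cite{ntnu2024rlcontrol,dataDeriven2019}. The starting point is the Bellman recursion that $V_i^{\vect\pi}$ satisfies along the true closed-loop Markov chain induced by $\mathbb{P}_i$ and $\vect\pi$,
\[
  V_i^{\vect\pi}\!\left(\vect{\tilde s}^k\right)=L_i\!\left(\vect{\tilde s}^k,\vect\pi_i(\vect{\tilde s}^k)\right)+\gamma\,\mathbb{E}\!\left[V_i^{\vect\pi}\!\left(\vect{\tilde s}^{k+1}\right)\,\Big|\,\vect{\tilde s}^k,\vect\pi\right],
\]
which is well posed on $\Xi$ by Assumption \ref{assump1}. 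The key idea is that, once the transition model is no longer trusted, the bookkeeping ``stage cost $+$ $\gamma\times$(next value)'' must be re-expressed through the \emph{available} (imperfect) model $\hat{\mathbb{P}}_i$ so that the model errors cancel in expectation.

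Concretely, I would set the modified terminal cost equal to the true local value function, $\hat V_i^f:=V_i^{\vect\pi}$, and the modified local stage cost equal to the model-consistent shaping term
\[
  \hat L_i\!\left(\vect{\hat s},\vect a_i\right):=V_i^{\vect\pi}\!\left(\vect{\hat s}\right)-\gamma\,\mathbb{E}_{\hat{\mathbb{P}}_i}\!\left[V_i^{\vect\pi}\!\left(\vect{\hat s}^{+}\right)\,\Big|\,\vect{\hat s},\vect a_i\right],
\]
where $\vect{\hat s}^{+}\sim\hat{\mathbb{P}}_i(\cdot\,|\,\vect{\hat s},\vect a_i)$; note that $\hat L_i$ depends on $(\vect{\hat s},\vect a_i)$ only through the known imperfect model, hence is implementable inside the DMPC scheme of Section \ref{sec:2}, and is real-valued on $\Xi$ by Assumption \ref{assump1}. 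Substituting $\hat V_i^f$ and $\hat L_i$ into the right-hand side of \eqref{eq::modif-value} and applying the tower property along the closed-loop chain generated by $\hat{\mathbb{P}}_i$ and $\vect\pi_i$ --- i.e.\ $\mathbb{E}\big[\mathbb{E}_{\hat{\mathbb{P}}_i}[V_i^{\vect\pi}(\vect{\hat s}^{\ell+1})\mid\vect{\hat s}^\ell,\vect\pi_i]\big]=\mathbb{E}[V_i^{\vect\pi}(\vect{\hat s}^{\ell+1})]$ --- the right-hand side reduces to
\begin{align*}
  \mathbb{E}\Bigg[&\gamma^N V_i^{\vect\pi}\!\left(\vect{\hat s}^{k+N}\right)+\sum_{\ell=k}^{k+N-1}\gamma^{\ell-k}V_i^{\vect\pi}\!\left(\vect{\hat s}^{\ell}\right)\\
  &\qquad\qquad\qquad-\sum_{\ell=k}^{k+N-1}\gamma^{\ell-k+1}V_i^{\vect\pi}\!\left(\vect{\hat s}^{\ell+1}\right)\Bigg].
\end{align*}
Shifting the summation index in the last sum makes the two sums telescope to $V_i^{\vect\pi}(\vect{\hat s}^k)-\gamma^N V_i^{\vect\pi}(\vect{\hat s}^{k+N})$; the $\gamma^N$-terms cancel and, since $\vect{\hat s}^k=\vect{\tilde s}^k$ is the deterministic initial condition, what remains is $\mathbb{E}[V_i^{\vect\pi}(\vect{\hat s}^k)]=V_i^{\vect\pi}(\vect{\tilde s}^k)$. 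This identity holds for every horizon $N\geq 1$, which is exactly the claim.

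I expect the only genuine obstacle to be well-definedness rather than algebra: the re-indexing/telescoping and the interchange of the finite sum with the expectation are legitimate only if each $\mathbb{E}[V_i^{\vect\pi}(\vect{\hat s}^\ell)]$ is finite, which is precisely what membership of the initial state in $\Xi$ enforces (Assumption \ref{assump1}) --- without it one risks spurious $\infty-\infty$ cancellations and $\hat L_i$ may fail to be real-valued. I would therefore state at the outset that all manipulations are restricted to $\Xi$ and, if $\mathcal{S}_i$ and $\mathcal{A}_i$ are unbounded, add a short remark that $\hat L_i$ and $\hat V_i^f$ inherit the needed measurability and integrability directly from $V_i^{\vect\pi}$, so that no extra regularity hypothesis is required. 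Finally, I would flag that the construction is not unique --- any function whose $\gamma$-discounted running expectation telescopes to zero may be added to $\hat L_i$ with a compensating change in $\hat V_i^f$ --- which is what provides the freedom to introduce the parametric DMPC cost used in the learning stage of Section \ref{sec:4}.
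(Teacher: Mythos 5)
Your proposal is correct and follows essentially the same route as the paper: you choose $\hat V_i^f=V_i^{\vect\pi}$ and $\hat L_i(\vect{\hat s}^\ell,\cdot)=V_i^{\vect\pi}(\vect{\hat s}^\ell)-\gamma\,\mathbb{E}\bigl[V_i^{\vect\pi}(\vect{\hat s}^{\ell+1})\,\big|\,\cdot\bigr]$, exactly the paper's construction in \eqref{eq::modif-costs}, and then telescope to recover $V_i^{\vect\pi}(\vect{\tilde s}^k)$ under the finiteness guaranteed by Assumption \ref{assump1}. Your added remarks on the tower property, measurability, and the non-uniqueness of the shaping are consistent with (and slightly more explicit than) the paper's argument, but do not change the underlying proof.
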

\begin{proof}
Let us choose the modified local terminal and stage costs as
\begin{subequations}\label{eq::modif-costs}
   \begin{align}
       &\hat V_i^f\left(\hat{\vect s}^{k+N}\right)=V_i^{\vect\pi}\left(\hat{\vect s}^{k+N}\right),\\\label{eq::modif-cost}
       &\hat L_i\left(\hat{\vect s}^\ell,\vect \pi_i\left(\hat{\vect s}^\ell\right)\right)=\\\nonumber
       &\left\{\begin{matrix}
          V_i^{\vect\pi}\left(\hat{\vect s}^\ell\right)-\gamma\mathcal{V}_i^{+}\left(\hat{\vect s}^\ell,\vect a_i^\ell\right)\quad\text{if}\quad\left|\mathcal{V}_i^{+}\left(\hat{\vect s}^\ell,\vect a_i^\ell\right)\right|<\infty\\ 
          \infty\qquad\qquad\qquad\qquad\text{Otherwise}
         \end{matrix}\right.,
   \end{align}
\end{subequations}
where $\mathcal{V}_i^{+}\left(\hat{\vect s}^\ell,\vect a_i^\ell\right)=\mathbb{E}\left[V_i^{\vect\pi}\left(\hat{\vect s}^{\ell+1}\right)\Bigg|\vect a^\ell=\vect \pi(\vect s^\ell)\right]$. Under Assumption 1, the modified local terminal and stage costs in \eqref{eq::modif-costs} are retained finite, and using a telescoping sum, the centralized $N$-step value function \eqref{eq::modif-value} is rewritten as
\begin{align}
    &\hat V_i^{N,\vect\pi}\left(\vect {\tilde s}^k\right)=
    \mathbb{E}\Bigg[\gamma^{N}V_i^{\vect\pi}\left(\hat{\vect s}^{k+N}\right) +V_i^{\vect\pi}\left(\hat{\vect s}^k\right)\\\nonumber
    &\qquad-\gamma\mathbb{E}\left[V_i^{\vect\pi}\left(\hat{\vect s}^{k+1}\right)\right]+\gamma V_i^{\vect\pi}\left(\hat{\vect s}^{k+1}\right)\\\nonumber
    &\qquad-\gamma^2\mathbb{E}\left[V_i^{\vect\pi}\left(\hat{\vect s}^{k+2}\right)\right]+\gamma^2V_i^{\vect\pi}\left(\hat{\vect s}^{k+2}\right)+\ldots\\\nonumber
    &+\gamma^{N-1}V_i^{\vect\pi}\left(\hat{\vect s}^{k+N-1}\right)-\gamma^N\mathbb{E}\left[V_i^{\vect\pi}\left(\hat{\vect s}^{k+N}\right)\right]\Bigg]\\\nonumber
    &=V_i^{\vect\pi}\left(\left[\vect {\hat s}_i^k,W_{ij}\left(\vect {\hat s}_j^k\right)\right]^\top\right)\Big|_{\vect{\hat s}^k=\vect{\tilde s}^k}\stackrel{\text{(Remark 1)}}{=}V_i^{\vect\pi}\left(\vect{\tilde s}^k\right).
\end{align}
\end{proof}

\begin{theorem}
For local value functions $\hat V_i^{N,\vect\pi}\left(\vect{\tilde s}^k\right)$ learned on the transition model $\hat{\mathbb{P}}_i\Big[\vect{\tilde s}^{k+1}|\vect{\tilde s}^k,\vect\pi_i,\vect\pi_{j\in\mathcal{M}_i}^\star\Big]$, the corresponding local policy $\vect\pi_i$ then converges to the optimal local policy $\vect\pi_i^\star$.
\end{theorem}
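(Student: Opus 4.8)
The plan is to replay the telescoping-sum identity behind Theorem~\ref{theorem:MPC-V}, but anchored at the \emph{optimal} local value function and pushed through the MPC minimization rather than through a policy evaluation, after first freezing the neighbor policies at $\vect\pi_{j\in\mathcal{M}_i}^\star$ so that agent $i$ faces an ordinary single-agent MDP.

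First I would fix $\vect\pi_{j\in\mathcal{M}_i}^\star$; then the augmented-state transition $\hat{\mathbb{P}}_i[\vect{\tilde s}^{k+1}\mid\vect{\tilde s}^k,\vect a_i^k]$ is a bona fide discounted MDP for agent $i$ with stage cost $L_i$ and discount $\gamma$. Let $V_i^\star$ and $\mathcal{Q}_i^\star$ denote its optimal value and action-value functions, so that $V_i^\star(\vect{\tilde s})=\min_{\vect a_i}\mathcal{Q}_i^\star(\vect{\tilde s},\vect a_i)$ with $\vect\pi_i^\star(\vect{\tilde s})\in\argmin_{\vect a_i}\mathcal{Q}_i^\star(\vect{\tilde s},\vect a_i)$; by the linear value decomposition of Remark~\ref{Value_decomposition}, $V_i^\star$ coincides on $\Xi$ with the $i$-th summand $V_i^{\vect\pi^\star}$ of the decomposed optimal global value, and agent $i$'s best response to the frozen neighbors is $\vect\pi_i^\star$.

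Next I would specialize the cost modification of Theorem~\ref{theorem:MPC-V} to $\vect\pi=\vect\pi^\star$, but expressed through the action-value function so that optimality is encoded: set $\hat V_i^f=V_i^\star$ and, wherever it is finite, $\hat L_i(\vect{\hat s}^\ell,\vect a_i^\ell)=\mathcal{Q}_i^\star(\vect{\hat s}^\ell,\vect a_i^\ell)-\gamma\,\mathbb{E}\!\left[V_i^\star(\vect{\hat s}^{\ell+1})\mid \vect{\hat s}^\ell,\vect a_i^\ell\right]$ with the expectation taken under the imperfect MPC model $\hat{\mathbb{P}}_i$, and $+\infty$ otherwise; by Assumption~\ref{assump1} both stay finite on $\Xi$, as in Theorem~\ref{theorem:MPC-V}. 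Substituting these costs into the $N$-step cost of \eqref{eq::modif-value} and cancelling the $\gamma$-weighted $V_i^\star$ terms along any model-generated horizon trajectory (the same telescoping as in the proof of Theorem~\ref{theorem:MPC-V}) leaves the residual $\mathcal{Q}_i^\star(\vect{\hat s}^k,\vect a_i^k)+\sum_{\ell=k+1}^{k+N-1}\gamma^{\ell-k}\bigl(\mathcal{Q}_i^\star(\vect{\hat s}^\ell,\vect a_i^\ell)-V_i^\star(\vect{\hat s}^\ell)\bigr)$. Since $\mathcal{Q}_i^\star(\vect{\hat s},\vect a_i)\ge V_i^\star(\vect{\hat s})$ with equality iff $\vect a_i=\vect\pi_i^\star(\vect{\hat s})$, every summand is nonnegative and is annihilated by choosing $\vect a_i^\ell=\vect\pi_i^\star(\vect{\hat s}^\ell)$ for $\ell>k$, while the leading term is minimized over $\vect a_i^k$ at $\vect a_i^k=\vect\pi_i^\star(\vect{\hat s}^k)$, whose value is $V_i^\star(\vect{\hat s}^k)$. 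Hence the optimal value of the modified local MPC equals $V_i^\star(\vect{\tilde s}^k)$ and its first optimal input --- i.e.\ the induced local policy $\vect\pi_i$ --- equals $\vect\pi_i^\star$ on $\Xi$, for every horizon $N$ and regardless of the model mismatch. The ``converges'' part then follows by observing that the parameterized terminal/stage costs tuned by the ADMM-based MABO procedure of Section~\ref{sec:4} approach this ideal modification as the iterations proceed (using the convergence and acquisition-function properties established there), so that continuity of the MPC $\argmin$ map gives $\vect\pi_i\to\vect\pi_i^\star$.

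The hard part will be the reduction step: once the neighbors are frozen at their optimal policies, one must justify that minimizing the \emph{decomposed} local value $V_i$, rather than the global $V=\sum_j V_j$, is what actually pins down $\vect\pi_i^\star$ --- equivalently, that the global optimizer factorizes agent by agent along the optimal neighbor trajectories. This is exactly the individual-global-optimality hypothesis implicit in the linear value decomposition of Remark~\ref{Value_decomposition}, which I would make an explicit standing assumption before invoking it; granted that, the remaining telescoping bookkeeping is a routine rerun of the proof of Theorem~\ref{theorem:MPC-V} with $\mathcal{Q}_i^\star$ in place of $V_i^{\vect\pi}$ and a minimization appended.
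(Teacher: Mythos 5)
Your proof is essentially correct but follows a genuinely different route from the paper's. The paper works at the level of Bellman equations: it writes the Bellman recursion for $\hat V_i^{N,\vect\pi}$, minimizes over the neighbor policies $\vect\pi_{j\in\mathcal{M}_i}$ and folds that minimization into the transition model $\hat{\mathbb{P}}_i[\cdot\,|\,\cdot,\vect\pi_i,\vect\pi_{j\in\mathcal{M}_i}^\star]$, then minimizes over $\vect\pi_i$, invokes Theorem \ref{theorem:MPC-V} to identify the double minimum with $V_i^\star$, and closes with a contradiction argument to obtain \eqref{eq:local_prob}. You instead freeze the neighbors, introduce the optimal action-value function $\mathcal{Q}_i^\star$, and rerun the telescoping with the stage-cost modification $\hat L_i=\mathcal{Q}_i^\star-\gamma\,\mathbb{E}[V_i^\star(\cdot^{+})]$, so that the $N$-step cost collapses to $V_i^\star(\vect{\tilde s}^k)$ plus a sum of nonnegative advantage terms $\mathcal{Q}_i^\star-V_i^\star$ annihilated exactly at $\vect a_i=\vect\pi_i^\star$. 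Your route buys something the paper's does not make explicit: Theorem \ref{theorem:MPC-V}'s modification is action-independent at each stage and therefore only certifies value equivalence under a fixed policy, whereas the $\mathcal{Q}_i^\star$-based modification directly encodes optimality into the argmin of the local MPC, which is really what the theorem's conclusion about recovering $\vect\pi_i^\star$ requires; you also correctly flag that the agent-by-agent factorization of the global optimizer (the individual-global-optimality content of the linear decomposition in Remark \ref{Value_decomposition}) is an implicit hypothesis that both arguments lean on, and that deserves to be stated as a standing assumption. The paper's route is shorter and stays closer to its Theorem \ref{theorem:MPC-V}, at the cost of some looseness in the min-folding step and the final contradiction. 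The one place where your write-up is no stronger than the paper's is the literal "convergence" claim: your appeal to the MABO iterations and continuity of the MPC argmin map is deferred rather than proved, but the paper's own proof likewise stops at the argmin characterization \eqref{eq:local_prob}, so this is not a gap relative to what the paper actually establishes.
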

\begin{proof}
Let us define the Bellman equation as
    \begin{align}
        &\hat V_i^{N,\vect\pi}\left(\vect{\tilde s}^k\right)=L_i\left(\vect{\tilde s}^k,\vect\pi_i\left(\vect{\tilde s}^k\right)\right)\\\nonumber
    &\qquad\qquad+\gamma\mathbb{E}_{\hat{\mathbb{P}}_i\Big[\vect{\tilde s}^{k+1}|\vect{\tilde s}^k,\vect\pi_i,\vect\pi_{j\in\mathcal{M}_i}\Big]}\left[\hat V_i^{N,\vect\pi}\left(\vect{\tilde s}^{k+1}\right)\right].
    \end{align}
From the perspective of other agents $\mathcal{M}_i$, the optimal version of the above Bellman equation then reads as
    \begin{align}
        &\min_{\vect\pi_{j\in\mathcal{M}_i}}\hat V_i^{N,\vect\pi}\left(\vect{\tilde s}^k\right)=L_i\left(\vect{\tilde s}^k,\vect\pi_i\left(\vect{\tilde s}^k\right) \right)\\\nonumber
    &+\gamma\min_{\vect\pi_{j\in\mathcal{M}_i}}\mathbb{E}_{\hat{\mathbb{P}}_i\Big[\vect{\tilde s}^{k+1}|\vect{\tilde s}^k,\vect\pi_i,\vect\pi_{j\in\mathcal{M}_i}\Big]}\left[\min_{\vect\pi_{j\in\mathcal{M}_i}}\hat V_i^{N,\vect\pi}\left(\vect{\tilde s}^{k+1}\right)\right].
    \end{align}
    By folding $\min_{\vect\pi_{j\in\mathcal{M}_i}}$ into $\hat{\mathbb{P}}_i$, we then obtain
    \begin{align}\label{eq:Bell_j}
                &\min_{\vect\pi_{j\in\mathcal{M}_i}}\hat V_i^{N,\vect\pi}\left(\vect{\tilde s}^k\right)=L_i\left(\vect{\tilde s}^k,\vect\pi_i\left(\vect{\tilde s}^k\right) \right)\\\nonumber
    &\qquad+\gamma\mathbb{E}_{\hat{\mathbb{P}}_i\Big[\vect{\tilde s}^{k+1}|\vect{\tilde s}^k,\vect\pi_i,\vect\pi_{j\in\mathcal{M}_i}^\star\Big]}\left[\min_{\vect\pi_{j\in\mathcal{M}_i}}\hat V_i^{N,\vect\pi}\left(\vect{\tilde s}^{k+1}\right)\right].
    \end{align}
We then minimize \eqref{eq:Bell_j} w.r.t $\vect\pi_i$ as
\begin{align}\label{eq:Bell_ij}
            &\min_{\vect\pi_i}\min_{\vect\pi_{j\in\mathcal{M}_i}}\hat V_i^{N,\vect\pi}\left(\vect{\tilde s}^k\right)=L_i\left(\vect{\tilde s}^k,\vect\pi_i\left(\vect{\tilde s}^k\right) \right)\\\nonumber
    &+\gamma\mathbb{E}_{\hat{\mathbb{P}}_i\Big[\vect{\tilde s}^{k+1}|\vect{\tilde s}^k,\vect\pi_i,\vect\pi_{j\in\mathcal{M}_i}^\star\Big]}\left[\min_{\vect\pi_i}\min_{\vect\pi_{j\in\mathcal{M}_i}}\hat V_i^{N,\vect\pi}\left(\vect{\tilde s}^{k+1}\right)\right],
\end{align}
and from Theorem \ref{theorem:MPC-V}, the following equality then holds
\begin{align}\label{eq:opt_value}
\min_{\vect\pi_i}\min_{\vect\pi_{j\in\mathcal{M}_i}}\hat V_i^{N,\vect\pi}\left(\vect{\tilde s}^k\right)=\min_{\vect\pi_i}\min_{\vect\pi_{j\in\mathcal{M}_i}}V_i^{\vect\pi}\left(\vect{\tilde s}^k\right)=V_i^\star\left(\vect{\tilde s}^k\right).
\end{align}
Using \eqref{eq:opt_value}, we then rewrite \eqref{eq:Bell_ij} as
\begin{align}\label{eq:Bell_opt}
&V_i^\star\left(\vect{\tilde s}^k\right)=L_i\left(\vect{\tilde s}^k,\vect\pi_i^\star\left(\vect{\tilde s}^k\right) \right)\\\nonumber
&\qquad\qquad\qquad+\gamma\mathbb{E}_{\hat{\mathbb{P}}_i\Big[\vect{\tilde s}^{k+1}|\vect{\tilde s}^k,\vect\pi_i,\vect\pi_{j\in\mathcal{M}_i}^\star\Big]}\left[V_i^\star\left(\vect{\tilde s}^{k+1}\right)\right].
\end{align}
Then, if there exists $\bar{\vect\pi}_i\neq\vect\pi_i^\star$ such that $\bar V_i\left(\vect{\tilde s}^k\right)\leq V_i^\star\left(\vect{\tilde s}^k\right)$, it will contradict the Bellman equation \eqref{eq:Bell_opt} so that $\vect\pi^\star$ is not the optimal joint policy. We then have that
\begin{align}\label{eq:local_prob}
    \vect\pi_i^\star\left(\vect{\tilde s}^k\right)\in\mathrm{arg}\min_{\vect\pi_i}\min_{\vect\pi_{j\in\mathcal{M}_i}}\hat V_i^{N,\vect\pi}\left(\vect{\tilde s}^k\right).
\end{align}
\end{proof}

\subsection{Parameterized DMPC for Local Policy Approximation}

In this section, we propose to use a parameterized DMPC scheme as approximators for the local policy $\vect\pi_i,i=1,\ldots,m$. More precisely, we aim to learn a local MPC parameterized by $\vect\theta_i$ such that the parameterized policy $\vect\pi_{\vect\theta_i}$ delivered from each local MPC can provide an accurate approximation of the optimal policy $\vect\pi_i^\star$ at $\vect\zeta^\star:=\left\{\vect\theta_1^\star,\ldots,\vect\theta_m^\star\right\}$ even if the MPC model cannot capture the real system perfectly. By modifying the DMPC scheme, we have
\begin{equation}
    V^{\vect\pi^\star}\left(\vect{\tilde s}^k\right)=\sum_{i=1}^m V_i^{\vect\pi^\star}\left(\vect{\tilde s}^k\right)=V^\star\left(\vect{\tilde s}^k\right)= \sum_{i=1}^m V_i^{\vect\theta_i^\star},
\end{equation}
where $\vect\pi^\star=\mathrm{col}\left\{\vect\pi_1^\star,\cdots,\vect\pi_m^\star\right\}$, and $V_i^{\vect\theta_i^\star}$ is the local optimal value function captured by the parameterized DMPC scheme (as an approximator for $V_i^\star$).

\begin{Remark}
To address the local optimization problem \eqref{eq:local_prob} at every time instant $k$, we use a parameterized DMPC based on dual decomposition. We then propose to parameterize each local MPC scheme so that the local value function $\hat V_i^{N,\vect\pi}\left(\vect{\tilde s}^k\right)$ can be replaced by $V_i^{\vect\theta_i}$ delivered from the local MPC. To capture the local optimal policies $\vect\pi_i^\star\left(\vect{\tilde s}^k\right)$, one needs to learn the corresponding local MPC schemes in a coordinated manner such that $\vect\zeta:=\left\{\vect\theta_1,\ldots,\vect\theta_m\right\}\rightarrow\vect\zeta^\star$ results in
\begin{align}
\hat{\vect\mu}^\star\left(\vect\zeta\right)\rightarrow\vect\mu^\star\left(\vect\zeta^\star\right),\quad V_i^{\vect\theta_i^\star}\left(\vect{\tilde s}^k,\vect\mu^\star\right)=V_i^\star\left(\vect{\tilde s}^k\right),
\end{align}
and $\vect\pi_i^\star\approx\vect\pi_i^{\vect\zeta^\star}\left(\vect{\tilde s}^k,\vect\mu^\star\right)\in\mathrm{arg}\min_{\vect\pi_i}V_i^{\vect\theta_i^\star}\left(\vect{\tilde s}^k,\vect\mu^\star\right)$.
\end{Remark}

Taking into account the previous observation in Remark 3, the value function captured from a parametrized local MPC scheme must be a function of the complete state $\vect{\tilde s}^k$. Although the value function associated with \eqref{eq:DMPC} is implicitly a function of $\vect{\tilde s}^k$ through the multipliers $\vect\mu_i$ and the slack variables $\bar {\vect w}_i^\ell$, it is only a function of the local state $\vect s_i^k$ explicitly. Hence, the same modification as in Theorem \ref{theorem:MPC-V} will not work for the DMPC scheme \eqref{eq:DMPC}. To address this issue, without loss of generality, we propose to force the initial slack variables to be $\bar {\vect w}_i^k=W_{ij}\left( {\vect{s}}_j^{k}\right)$. We also introduce an additional constraint to smooth out the slack variables. This constraint also acts as a model for propagating the effect of $W_{ij}\left( {\vect s}_j^{k}\right)$ along the prediction horizon. The proposed DMPC scheme is then formulated as
\begin{subequations}\label{eq:modif_DMPC}
		\begin{align}
		    &\min_{\hat {\vect x}_i,\hat {\vect u}_i,\bar {\vect w}_i}T_i^{\vect\mu^{(I)}}\left(\hat {\vect x}_i^{k+N},\bar {\vect w}_i^{k+N},\vect\delta_i^{k+N}\right)\\\nonumber
        &\qquad\qquad\qquad\qquad+\sum_{\ell=k}^{k+N-1}L_i^{\vect\mu^{(I)}}\left(\hat {\vect x}_i^{\ell},\hat {\vect u}_i^{\ell},\bar{\vect w}_i^{\ell},\vect\delta_i^{\ell}\right)\\
			&\mathrm{s.t.}\quad \hat {\vect x}_i^{\ell+1}=\vect f_i\left(\hat {\vect x}_i^{\ell},\hat {\vect u}_i^{\ell}\right),\quad \hat {\vect x}_i^{k}=\vect s_i^{k},\\
            &\qquad \bar {\vect w}_i^k=W_{ij}\left({\vect s}_j^{k}\right),\\
             &\qquad\bar {\vect w}_i^{\ell+1}=\bar {\vect w}_i^{\ell} +\vect\delta_i^\ell,\\
			&\qquad \vect h_i\left(\hat {\vect x}_i^{\ell},\hat {\vect u}_i^{\ell}\right)\leq 0,\quad\vect h_i\left(\hat {\vect x}_i^{k+N}\right)\leq 0.
		\end{align}
\end{subequations}
We next show that the value function associated to the proposed DMPC scheme above can capture the true value function $V_i^{\vect\pi}\left(\vect{\tilde s}^k\right)$ even if the underlying local models $\vect f_i$ do not capture the real subsystems (agents). 

\begin{Corollary}\label{theorem:modif-MPC-V}
Considering Remark \ref{rem_1} and under Assumption \ref{assump1}, one can find the modified cost functions $\hat T_i^{\vect\mu^{(I)}}$ and $\hat L_i^{\vect\mu^{(I)}}$ such that the value function associated with \eqref{eq:modif_DMPC} can capture $V_i^{\vect\pi}\left(\vect{\tilde s}^k\right)$ at $\vect\mu^\star$ even if the trajectories are imperfectly evolved by $\vect{\hat f}_i$.
\end{Corollary}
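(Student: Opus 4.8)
The plan is to derive this as a corollary of Theorem~\ref{theorem:MPC-V}, the new ingredient being the two extra constraints in \eqref{eq:modif_DMPC}, namely $\bar{\vect w}_i^k = W_{ij}(\vect s_j^k)$ and $\bar{\vect w}_i^{\ell+1} = \bar{\vect w}_i^\ell + \vect\delta_i^\ell$, which promote the DMPC value function from a function of the local state $\vect s_i^k$ only to a function of the full augmented state $\vect{\tilde s}^k$. First I would introduce the augmented local state $\hat{\vect s}_i^\ell := [\,\hat{\vect x}_i^\ell,\ \bar{\vect w}_i^\ell\,]^\top$ and observe that the initial-condition constraints $\hat{\vect x}_i^k = \vect s_i^k$, $\bar{\vect w}_i^k = W_{ij}(\vect s_j^k)$, together with Remark~\ref{rem_1} (which pins $\bar{\vect w}_i^\ell \in \mathbb{R}^{(m-1)n_{x_i}}$ so its dimension matches $W_{ij}$), give $\hat{\vect s}_i^k = \vect{\tilde s}^k$. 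Moreover the pair $\hat{\vect x}_i^{\ell+1} = \hat{\vect f}_i(\hat{\vect x}_i^\ell,\hat{\vect u}_i^\ell)$, $\bar{\vect w}_i^{\ell+1} = \bar{\vect w}_i^\ell + \vect\delta_i^\ell$ defines a deterministic (in general imperfect) one-step transition map for $\hat{\vect s}_i^\ell$ driven by the extended input $(\hat{\vect u}_i^\ell,\vect\delta_i^\ell)$; hence the objective of \eqref{eq:modif_DMPC} has precisely the $N$-step structure to which the telescoping argument of Theorem~\ref{theorem:MPC-V} applies.

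Second, I would specify the modified costs at $\vect\mu=\vect\mu^\star$. Writing $\hat{\vect s}_i^{\ell+1}$ as shorthand for $[\,\hat{\vect f}_i(\hat{\vect x}_i^\ell,\hat{\vect u}_i^\ell),\ \bar{\vect w}_i^\ell+\vect\delta_i^\ell\,]^\top$ (a function of the stage-cost arguments alone), set $\hat T_i^{\vect\mu^\star} := V_i^{\vect\pi}(\hat{\vect s}_i^{k+N})$ and
\[
\hat L_i^{\vect\mu^\star}\bigl(\hat{\vect x}_i^\ell,\hat{\vect u}_i^\ell,\bar{\vect w}_i^\ell,\vect\delta_i^\ell\bigr) := V_i^{\vect\pi}(\hat{\vect s}_i^\ell) - \gamma\,\mathbb{E}\!\left[V_i^{\vect\pi}(\hat{\vect s}_i^{\ell+1})\right]
\]
when the right-hand side is finite and $+\infty$ otherwise, exactly mirroring \eqref{eq::modif-costs}; under Assumption~\ref{assump1} these are finite on $\Xi$ for the same reason as in Theorem~\ref{theorem:MPC-V}. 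If one wishes to retain the explicit multiplier terms $(\vect\mu_i^\ell)^\top\bar{\vect w}_i^\ell-\sum_{j\neq i}(\vect\mu_{ji}^\ell)^\top W_{ji}(\hat{\vect x}_i^\ell)$ carried inside $L_i^{\vect\mu}$ and $T_i^{\vect\mu}$, one simply subtracts them off inside the modified base costs so that they are re-added when $\hat L_i^{\vect\mu^\star}$, $\hat T_i^{\vect\mu^\star}$ are assembled; evaluated at $\vect\mu^\star$ this leaves the computation below untouched.

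Third, I would run the telescoping sum: substituting into the cost of \eqref{eq:modif_DMPC} and collapsing $\sum_{\ell=k}^{k+N-1}\gamma^{\ell-k}\hat L_i^{\vect\mu^\star}+\gamma^N\hat T_i^{\vect\mu^\star}$ as in the proof of Theorem~\ref{theorem:MPC-V} leaves only $V_i^{\vect\pi}(\hat{\vect s}_i^k)=V_i^{\vect\pi}(\vect{\tilde s}^k)$ for \emph{every} feasible $(\hat{\vect u}_i,\vect\delta_i)$ --- the cancellation $-\gamma\mathbb{E}[V_i^{\vect\pi}(\hat{\vect s}_i^{\ell+1})]+\gamma V_i^{\vect\pi}(\hat{\vect s}_i^{\ell+1})$ being exact here because the augmented model is deterministic. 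Since the objective is then constant over the non-empty feasible set, the minimizing value function of \eqref{eq:modif_DMPC} equals $V_i^{\vect\pi}(\vect{\tilde s}^k)$ regardless of how poorly $\hat{\vect f}_i$ tracks the true agent. I would close by noting that at $\vect\mu^\star$ --- the fixed point of \eqref{eq:update_mu} --- the inter-agent coupling is resolved, so $\bar{\vect w}_i^\ell$ along the horizon is consistent with the neighbors' realized behaviour under $\vect\pi_{j\in\mathcal{M}_i}$ and the reproduced $V_i^{\vect\pi}$ is genuinely the $i$-th term of the decomposition \eqref{eq:v_decom}.

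The main obstacle is essentially bookkeeping rather than conceptual: one must verify that, with the slack-smoothing constraint in place, $\hat{\vect s}_i^\ell$ behaves as a bona fide augmented state --- in particular that $\hat L_i^{\vect\mu^\star}$ depends only on the listed arguments and that $\hat{\vect s}_i^k$ is pinned to $\vect{\tilde s}^k$ via Remark~\ref{rem_1} --- and that the Lagrange-multiplier terms riding along in $L_i^{\vect\mu}$, $T_i^{\vect\mu}$ are absorbed consistently at $\vect\mu^\star$. A secondary point worth stating explicitly is that the expectation in the modified costs must be taken over the closed-loop joint-policy Markov chain, so that $V_i^{\vect\pi}$ denotes the same object on both sides of the claimed identity.
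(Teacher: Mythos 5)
Your proposal is correct and follows essentially the same route as the paper's proof: choose $\hat T_i^{\vect\mu^\star}=V_i^{\vect\pi}$ at the terminal augmented state and $\hat L_i^{\vect\mu^\star}$ as the one-step telescoping difference of $V_i^{\vect\pi}$ along the deterministic augmented dynamics $(\hat{\vect x}_i^\ell,\bar{\vect w}_i^\ell)$, collapse the sum, and invoke the initial conditions $\hat{\vect x}_i^k=\vect s_i^k$, $\bar{\vect w}_i^k=W_{ij}(\vect s_j^k)$ together with Remark~\ref{rem_1} to land on $V_i^{\vect\pi}(\vect{\tilde s}^k)$. Your additional observations (that the objective becomes constant over the feasible set and that the multiplier terms can be absorbed at $\vect\mu^\star$) are consistent elaborations of the same argument rather than a different approach.
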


\begin{proof}
Assuming that the trajectories $\hat{\vect s}^1,\ldots,\hat{\vect s}^N$ in \eqref{eq::modif-value} are approximately deterministic, one can adopt the same structure without expectation to define the value function associated with \eqref{eq:modif_DMPC} using a possibly wrong deterministic model. The modified value function is then defined as
\begin{align}\label{eq:v_dmpc}
        &\hat V^{\text{DMPC}}=\gamma^N\hat T_i^{\vect\mu^{\star}}\left(\hat {\vect x}_i^{k+N},\bar {\vect w}_i^{k+N}\right)\\\nonumber
        &\qquad\qquad\qquad\qquad\qquad+\sum_{\ell=k}^{k+N-1}\gamma^{\ell-k}\hat L_i^{\vect\mu^{\star}}\left(\hat {\vect x}_i^{\ell},\hat {\vect u}_i^{\ell},\bar{\vect w}_i^{\ell}\right),
\end{align}
where $\hat T_i^{\vect\mu^{\star}},\hat L_i^{\vect\mu^{\star}}$ are the modified versions of the costs in \eqref{eq:modif_DMPC} with the last argument dropped. We then select the modified terminal and stage costs to be the same as those in Theorem \ref{theorem:MPC-V} as
    \begin{subequations}\label{eq::modif-costs_DMPC}
   \begin{align}
       &\hat T_i^{\vect\mu^{\star}}=V_i^{\vect\pi}\left(\hat {\vect x}_i^{k+N},\bar {\vect w}_i^{k+N}\right),\\
       &\hat L_i^{\vect\mu^{\star}}=V_i^{\vect\pi}\left(\hat {\vect x}_i^{k},\bar {\vect w}_i^{k}\right)-\gamma V_i^{\vect\pi}\left(\hat {\vect x}_i^{k+1},\bar {\vect w}_i^{k+1}\right).
   \end{align}
   \end{subequations}
Using \eqref{eq::modif-costs_DMPC}, we then rewrite the modified value function \eqref{eq:v_dmpc} as
       \begin{align}
        &\hat V^{\text{DMPC}}=\gamma^NV_i^{\vect\pi}\left(\hat {\vect x}_i^{k+N},\bar {\vect w}_i^{k+N}\right)\\\nonumber
        &\quad+\sum_{\ell=k}^{k+N-1}\gamma^{\ell-k}\left(V_i^{\vect\pi}\left(\hat {\vect x}_i^{\ell},\bar {\vect w}_i^{\ell}\right)-\gamma V_i^{\vect\pi}\left(\hat {\vect x}_i^{\ell+1},\bar {\vect w}_i^{\ell+1}\right)\right).
    \end{align}
Using a telescoping sum, we then obtain 
    \begin{align}
         &\hat V^{\text{DMPC}}=\gamma^NV_i^{\vect\pi}\left(\hat {\vect x}_i^{k+N},\bar {\vect w}_i^{k+N}\right)\\\nonumber
         &\qquad+V_i^{\vect\pi}\left(\hat {\vect x}_i^{k},\bar {\vect w}_i^{k}\right)-\gamma^NV_i^{\vect\pi}\left(\hat {\vect x}_i^{k+N},\bar {\vect w}_i^{k+N}\right).
    \end{align}
Under Remark \ref{rem_1} and applying the initial conditions $\hat {\vect x}_i^{k}=\vect s_i^{k}$ and $\bar {\vect w}_i^k=W_{ij}\left({\vect s}_j^{k}\right)$, we then obtain $\hat V^{\text{DMPC}}=V_i^{\vect\pi}\left(\vect{\tilde s}^k\right)$.
\end{proof}
By Theorem \ref{theorem:MPC-V} and Corollary \ref{theorem:modif-MPC-V}, we prove that the DMPC cost function can be modified such that the local value functions associated with the true multi-agent MDP are captured by the modified local MPC schemes. More specifically, the central theorem aims at showing that there exists such a modification and at understanding its structure. However, the proposed modification structure is not tractable for implementation purposes. To address this, we propose to parameterize the corresponding cost terms and learn them in a coordinated manner for the best closed-loop performance. We then propose to parameterize a discounted version of the DMPC scheme \eqref{eq:modif_DMPC} at $\hat{\vect\mu}^\star$ as 
\begin{subequations}\label{eq:DMPC_par}
		\begin{align}\label{eq:totalCost}
		    & V_i^{\vect\theta_i}\left(\vect{\tilde s}^k,\hat{\vect\mu}^\star\left(\vect\zeta\right)\right)=\\\nonumber
            &\min_{\hat {\vect x}_i,\hat {\vect u}_i,\bar {\vect w}_i,\vect\sigma_i} \gamma^N\left(T_i^{\vect\theta_i,\hat{\vect\mu}^{\star}}\left(\hat {\vect x}_i^{k+N},\bar {\vect w}_i^{k+N}\right)+\vect p_f^\top\vect\sigma_i^{k+N}\right)\\\nonumber
        &+\sum_{\ell=k}^{k+N-1}\gamma^{\ell-k}\left(L_i^{\vect\theta_i,\hat{\vect\mu}^{\star}}\left(\hat {\vect x}_i^{\ell},\hat {\vect u}_i^{\ell},\bar{\vect w}_i^{\ell}\right)+\vect p^\top\vect\sigma_i^\ell+\left\|\vect\delta_i^\ell \right\|_M^2\right)\\
			&\mathrm{s.t.}\nonumber\\
			&\quad \hat {\vect x}_i^{\ell+1}=\vect f_i^{\vect\theta_i}\left(\hat {\vect x}_i^{\ell},\hat {\vect u}_i^{\ell}\right),\quad \hat {\vect x}_i^{k}=\vect s_i^{k},\label{ct1}\\
            &\quad \bar {\vect w}_i^k=W_{ij}^{\vect\theta_i}\left( {\vect s}_j^{k}\right),\\
            &\quad\bar {\vect w}_i^{\ell+1}=\bar {\vect w}_i^{\ell} +\vect\delta_i^\ell,\\
            &\quad h_i^{\vect\theta_i}\left(\hat {\vect x}_i^{\ell},\hat {\vect u}_i^{\ell}\right)\leq\vect\sigma_i^{\ell},\quad h_i^{\vect\theta_i}\left(\hat {\vect x}_i^{k+N}\right)\leq\vect\sigma_i^{k+N},
		\end{align}
\end{subequations}
where the inequality constraints are collected by the parametric function $h_i^{\vect\theta_i}$.
To guarantee the recursive feasibility of local MPC schemes \eqref{eq:DMPC}, we consider some slack variables $\vect\sigma_i$ on both the terminal and stage inequality constraints, where the corresponding cost terms in \eqref{eq:totalCost} are penalized with sufficiently large weights $\vect p$ and $\vect p_f$.


\section{Fusion of Multi-agent Bayesian Optimization (MABO) and DMPC}\label{sec:4}

We assume that the state-space model $\vect{x}_i^{k+1} = \vect{f}_i(\vect{x}_i^k,\,\vect{u}_i^k),~i=1,\ldots,m$, cannot capture the true subsystems exactly, such that there is a model mismatch between the real system and the local MPC model each agent employs. To account for this mismatch, we leverage the results detailed in Section \ref{sec:3} to modify the cost function of the local MPC schemes (and possibly their models and constraints) in a coordinated manner leading to the optimal closed-loop performance of each agent despite the model mismatch. To this end, we employ an Alternating Direction Method of Multipliers (ADMM)-based multi-agent Bayesian optimization approach, initially proposed in \cite{KRISHNAMOORTHY20232232}, and infer the optimal parameterization from closed-loop data. 

\subsection{Review of Bayesian Optimization}

Bayesian Optimization (BO) is an efficient approach for optimizing complex, expensive-to-obtain, and noisy black-box functions as
\begin{align}
   \vect{\theta}^\star = \arg \min_{\vect{\theta}\in\mathcal{P}} J(\vect{\theta}), 
\end{align}
where $\vect\theta^\star$ is the global minimizer of an unknown objective function $J(\vect\theta)$ on the parameter space $\mathcal{P}$. In control applications, BO is particularly useful for learning control hyperparameters, e.g., in MPC. However, there are no existing works employing BO for learning networked model-based control schemes, such as distributed MPC, where the local models cannot represent the true multi-agent system. The BO methods use Gaussian Process (GP) surrogate models to learn an approximation of not-explicitly-known functions with respect to some parameters, e.g., the closed-loop performance of a parameterized MPC $J\left(\vect\theta\right):=J\left(\vect\pi^{\vect\theta}\right)$ with respect to the policy parameters $\vect\theta$. Considering \( J: \mathbb{R}^{n_{\vect\theta}} \rightarrow \mathbb{R} \), we then define the corresponding GP model as
\begin{align}
  J(\vect\theta) \sim \mathcal{GP}(\mu(\vect\theta), k(\vect\theta, \vect\theta')),
\end{align}
where $\mu(\vect{\theta}) = \mathbb{E}[J(\vect{\theta})]$ is the mean function, typically assumed to be zero, and the kernel $k(\vect \theta, \vect \theta')$ is the covariance function that relates points in the input space. 
A commonly used kernel is the squared exponential function as
\begin{align}
    k(\vect{\theta}, \vect{\theta'}) = \sigma_J^2 \exp\left(-\frac{1}{2l^2} ||\vect{\theta} - \vect{\theta'}||^2\right),
\end{align}
where $\sigma_J^2$ is the signal variance, and $l$ is the length scale. The choice of kernel significantly impacts the behavior of the GP model. Given observed data \(\mathcal{D}^k = \{(\vect{\theta}^i, y^i)\}_{i=1}^k,\quad y^i = J(\vect{\theta}^i) + \epsilon^i\) and \(\epsilon^i \sim \mathcal{N}(0, \sigma_n^2)\), we then have
\[
\mathbf{y} \sim \mathcal{N}(\mu(\vect{\Theta}), K + \sigma_n^2 I),
\]
where $\vect\Theta=\left\{\vect\theta^i\right\}_{i=1}^k$ and $K$ denotes the covariance matrix ($n\times n$ kernel matrix with elements $[K]_{(i,j)}=k(\vect\theta_i,\vect\theta_j)$) computed over the training inputs. The posterior distribution of \(J(\vect{\theta}^\star)\) at a new input \(\vect{\theta}^\star\) is then given by
\[
\begin{bmatrix}
\mathbf{y} \\
J^\star 
\end{bmatrix}
\sim \mathcal{N}\left(\begin{bmatrix}
\mu(\vect{\Theta}) \\
\mu(\vect{\theta}^\star)
\end{bmatrix}, \begin{bmatrix}
K + \sigma_n^2 I & k(\vect{\theta}^\star, \vect{\Theta}) \\
k(\vect{\Theta}, \vect{\theta}^\star) & k(\vect{\theta}^\star, \vect{\theta}^\star)
\end{bmatrix}\right).
\]
This allows us to derive the predictive mean and variance for $J(\vect{\theta}^\star)$ as
\[
\mu(\vect{\theta}^\star) = k(\vect{\theta}^\star, \vect{\Theta})(K + \sigma_n^2 I)^{-1} \mathbf{y},
\]
\[
\sigma^2(\vect{\theta}^\star) = k(\vect{\theta}^\star, \vect{\theta}^\star) - k(\vect{\theta}^\star, \vect{\Theta})(K + \sigma_n^2 I)^{-1}k(\vect{\Theta}, \vect{\theta}^\star).
\]
The updated GP models are then used to induce an acquisition function $\alpha(\vect{\theta})$, which helps find the optimum of the unknown objective function $J(\vect\theta)$, that is
\begin{align}
   \vect{\theta}^\star = \arg \min_{\vect{\theta}\in\mathcal{P}} \alpha(\vect{\theta}).
\end{align}
In this paper, we use the Expected Improvement (EI) as an acquisition function:
\begin{align}\label{eq:EI}
    \alpha_{\text{EI}}(\vect{\theta}) = \mathbb{E}[\min(J(\vect{\theta})-J^\star, 0)],
\end{align}
where $J^\star$ is the best observed function value. This acquisition function can be computed using the properties of the Gaussian distribution as
\begin{align}\label{eq:acq_EI}
  \alpha_{\text{EI}}(\vect{\theta}) =-\left( \left(J^\star - \mu(\vect{\theta})\right) \Phi(Z) + \sigma(\vect{\theta}) \phi(Z)\right),  
\end{align}
where $Z = \frac{J^\star - \mu(\vect{\theta})}{\sigma(\vect{\theta})}$, and $\Phi(Z)$ and $\phi(Z)$ denote the standard normal Cumulative Density Function (CDF) and Probability Density Function (PDF) of the standard normal distribution, respectively. As discussed previously, the aim is to leverage BO to learn a parametric distributed MPC so that local MPCs can deliver the corresponding optimal policies $\vect\pi^{\vect\theta_i^\star}$. Therefore, the BO-based learning method must be able to capture the optimal parameters $\vect\theta^\star$. However, popular acquisition functions in the open literature have a greedy and myopic structure that selects the next best point based solely on its immediate impact without considering future rewards. This then causes the process to disregard the long-term effect of the next sample, potentially preventing the learning process from obtaining $\vect\theta^\star$. To address this issue, non-myopic BO methods have been formulated in the MDP framework where they can frame the exploration-exploitation problem as a balance of immediate and future
rewards \cite{nonmyopic_1,nonmyopic_2,nonmyopic_3}.

\begin{Proposition}[BO as Finite-Horizon MDP]\label{prop:nonmyopicAF}
Bayesian optimization (BO) with the Expected Improvement (EI) acquisition function can be formulated as a finite-horizon Markov Decision Process (MDP) of horizon $H$, where the optimal query point $\vect\theta^\star$ minimizes the non-myopic acquisition function
    \begin{align}
        &\tilde\alpha_{\text{EI}}(\vect{\theta})=\alpha_{\text{EI}}(\vect{\theta})\qquad\qquad\text{if}\quad H=1,\\\nonumber
        &\tilde\alpha_{\text{EI}}(\vect{\theta})=V^{H,\vect\tau}\left(\mathcal{D}^k\right)\qquad \text{if}\quad H>1,
    \end{align}
with $V^{H,\vect\tau}\left(\mathcal{D}^k\right)$ denoting the associated value function under policy $\vect\tau: \mathcal{D}^k\rightarrow\vect\theta^{k+1}$.
\end{Proposition}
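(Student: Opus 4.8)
The plan is to exhibit an explicit finite-horizon MDP whose value function, unrolled from the current data set $\mathcal{D}^k$, reproduces the claimed non-myopic acquisition function, and then to invoke the dynamic programming principle to conclude that the optimal query point of that MDP is the minimizer of $\tilde\alpha_{\text{EI}}$. This mirrors the fusion argument of Theorem~\ref{theorem:MPC-V}, with the data set playing the role of the state and the Bellman recursion below playing the role of the telescoping identity.

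First I would specify the MDP tuple $\{\mathcal{S}_{\text{BO}},\mathcal{P},L_{\text{BO}},\mathbb{P}_{\text{BO}},H\}$: the state at stage $t$ is the augmented data set $\mathcal{D}^{k+t}=\mathcal{D}^k\cup\{(\vect\theta^{k+1},y^{k+1}),\ldots\}$; the action is the next query $\vect\theta^{k+t+1}\in\mathcal{P}$; the transition kernel $\mathbb{P}_{\text{BO}}$ is the GP predictive law, i.e., given $(\mathcal{D}^{k+t},\vect\theta)$ one draws $y\sim\mathcal{N}\!\left(\mu(\vect\theta),\sigma^2(\vect\theta)+\sigma_n^2\right)$ with $\mu,\sigma^2$ the posterior moments computed from $\mathcal{D}^{k+t}$, and moves deterministically to $\mathcal{D}^{k+t}\cup\{(\vect\theta,y)\}$; and the one-step cost is the expected improvement increment $L_{\text{BO}}(\mathcal{D}^{k+t},\vect\theta)=\mathbb{E}\!\left[\min(J(\vect\theta)-J^\star_t,0)\right]\le 0$, with $J^\star_t$ the incumbent best value recorded in $\mathcal{D}^{k+t}$. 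The value function of a query policy $\vect\tau:\mathcal{D}^{k+t}\mapsto\vect\theta^{k+t+1}$ is then $V^{H,\vect\tau}(\mathcal{D}^k)=\mathbb{E}\!\left[\sum_{t=0}^{H-1} L_{\text{BO}}\!\left(\mathcal{D}^{k+t},\vect\tau(\mathcal{D}^{k+t})\right)\right]$, the expectation being over the GP-induced Markov chain on data sets, exactly as in \eqref{eq:central_V} with $L$ replaced by $L_{\text{BO}}$.

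The step I expect to be the main obstacle is verifying that $\mathcal{D}^{k+t}$ is a bona fide Markov state. The relevant observations are: (i) for the GP model the posterior mean and variance at any test point are deterministic functions of the data set (the closed-form expressions stated before \eqref{eq:EI}), so the law of the next observation depends on the history only through the current data set; (ii) the incumbent $J^\star_t$ and any data-driven hyperparameters are likewise deterministic functions of $\mathcal{D}^{k+t}$; hence both $L_{\text{BO}}$ and $\mathbb{P}_{\text{BO}}$ are functions of $(\mathcal{D}^{k+t},\vect\theta)$ alone, which is precisely the Markov property. I would also record an integrability condition analogous to Assumption~\ref{assump1} guaranteeing $|\mathbb{E}[V^{H,\vect\tau}(\mathcal{D}^k)]|<\infty$; since $-\!\left(\sigma(\vect\theta)\phi(Z)+(J^\star-\mu(\vect\theta))\Phi(Z)\right)$ is bounded whenever the GP moments are bounded, this holds automatically on a compact $\mathcal{P}$.

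The conclusion then splits into the two cases of the statement. For $H=1$ the sum collapses to a single term and $V^{1,\vect\tau}(\mathcal{D}^k)=L_{\text{BO}}(\mathcal{D}^k,\vect\theta)=\mathbb{E}[\min(J(\vect\theta)-J^\star,0)]$, which is exactly $\alpha_{\text{EI}}(\vect\theta)$ of \eqref{eq:EI} (equivalently its closed form \eqref{eq:acq_EI}), so $\tilde\alpha_{\text{EI}}=\alpha_{\text{EI}}$. For $H>1$, the dynamic programming principle applied to the MDP above yields the Bellman recursion $V^{H,\vect\tau}(\mathcal{D}^k)=L_{\text{BO}}(\mathcal{D}^k,\vect\tau(\mathcal{D}^k))+\mathbb{E}_{y}\!\left[V^{H-1,\vect\tau}\!\left(\mathcal{D}^k\cup\{(\vect\tau(\mathcal{D}^k),y)\}\right)\right]$, which makes the long-term effect of the next sample explicit. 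Writing $\tilde\alpha_{\text{EI}}(\vect\theta):=V^{H,\vect\tau}(\mathcal{D}^k)$ for the branch in which the first query is $\vect\theta$ (and $\vect\tau$ the optimal continuation), the optimality equation of the finite-horizon MDP gives $\vect\theta^\star=\argmin_{\vect\theta\in\mathcal{P}}\tilde\alpha_{\text{EI}}(\vect\theta)$, i.e., the optimal query point of the $H$-step MDP coincides with the minimizer of the non-myopic acquisition function, as claimed.
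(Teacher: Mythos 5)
Your proposal follows essentially the same route as the paper's own proof: you cast BO as a finite-horizon MDP with the data set $\mathcal{D}^{k+t}$ as state and the next query as action, take the (expected) improvement $\min(y(\vect\theta)-y^{\star},0)$ as stage cost, and identify $\alpha_{\text{EI}}$ as the first term of the Bellman recursion so that the $H$-step value function is the non-myopic acquisition function. Your write-up is somewhat more careful than the paper's --- you explicitly verify that the data set is a Markov state via the GP posterior being a deterministic function of $\mathcal{D}^{k+t}$, add an integrability condition, and treat the $H=1$ collapse separately --- but these are elaborations of the same argument rather than a different one.
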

\begin{proof}
Let us view the BO as an $H$-step Dynamic Programming (DP) by casting it as a finite-horizon MDP for which the state and action are $\mathcal{D}^k$ and $\vect\theta^{k+1}$, respectively. Considering  \(y^i = J(\vect{\theta}^i) + \epsilon^i\) and \(\epsilon^i \sim \mathcal{N}(0, \sigma_n^2)\), the collected data is then a Markov process, and hence the optimal policy $\vect\tau^\star: \mathcal{D}^k\rightarrow\vect\theta^{\star,k+1}$ must satisfy the Bellman principle of optimality \cite{paulson2024}. Let us define a stage cost function as
    \begin{align}
         L\left(\mathcal{D}^\ell,\vect\theta^{\ell+1}\right)=\min(y(\vect{\theta}^{\ell+1})-y^{\star,\ell}, 0),
    \end{align}
where $y^{\star,\ell}$ denotes the minimum observed value in the set $\mathcal{D}^\ell$, i.e., $y^{\star,\ell}=\min\left\{y^0,\ldots,y^\ell\right\}$. We then define the corresponding finite-horizon value function as
\begin{align}
V^{H,\vect\tau}\left(\mathcal{D}^k\right)=&\mathbb{E}_y\left[\sum_{\ell=k}^{k+H} L\left(\mathcal{D}^\ell,\vect\tau^\ell\right)\right].
\end{align}
The Bellman equation then reads as
\begin{align}
V^{H,\vect\tau}\left(\mathcal{D}^k\right)=\mathbb{E}_y\left[L\left(\mathcal{D}^k,\vect\tau^k\right)\right]+\mathbb{E}_y\left[V^{H,\vect\tau}\left(\mathcal{D}^+\right)\right].
\end{align}
We observe that 
\begin{align}
\mathbb{E}_y\left[L\left(\mathcal{D}^k,\vect\tau^k\right)\right]=\mathbb{E}_y\left[\min(y(\vect{\theta}^{\ell+1})-y^{\star,\ell}, 0)\right]=\alpha_{\text{EI}}(\vect{\theta}),
\end{align}
such that the expected improvement acquisition function then satisfies the Bellman equation as
\begin{align}\label{eq_bell_nonmyopic}
V^{H,\vect\tau}\left(\mathcal{D}^k\right)=\alpha_{\text{EI}}(\vect{\theta})+\mathbb{E}_y\left[V^{H,\vect\tau}\left(\mathcal{D}^+\right)\right].
\end{align}
For $H>1$, we then obtain the optimal solution by minimizing the non-myopic acquisition function as 
\begin{align}
\vect\theta^\star=\vect\tau\left(\mathcal{D}^k\right)=\argmin_{\vect\theta}V^{H,\vect\tau}\left(\mathcal{D}^k\right).
\end{align}
\end{proof}
It is noted that the second term in \eqref{eq_bell_nonmyopic} can be practically computed using Monte Carlo (MC) simulations as
\begin{align}
    \mathbb{E}_y\left[V^{H,\vect\tau}\left(\mathcal{D}^+\right)\right]\approx\frac{1}{M}\sum_{i=1}^M\sum_{\ell=k}^{k+H-1} \left(y^{\star,\ell}-y^{\ell+1}\right)^+.
\end{align}

\subsection{Multi-Agent Bayesian Optimization}

In this paper, we describe our approach to learn local MPC schemes in a distributed framework using BO. To this end, we leverage the BO in a multi-agent framework by introducing decomposable Multi-agent Bayesian Optimization (MABO). It should be noted that the use of non-myopic acquisition functions, as obtained in Proposition \ref{prop:nonmyopicAF}, can significantly increase the computational complexity. Moreover, the convergence analysis of ADMM-based MABO becomes more challenging when using non-myopic acquisition functions. In this paper, we then consider the myopic acquisition function $\alpha_{\text{EI}}(\vect{\theta})$ to formulate a MABO algorithm. However, one may use the proposed non-myopic acquisition function for more complex objective functions in the proposed MABO since this look-ahead acquisition function $\tilde\alpha_{\text{EI}}(\vect{\theta})$ can be simply replaced by $\alpha_{\text{EI}}(\vect{\theta})$ without changing the main structure of MABO.

\begin{Remark}
As observed in the formulation of the parametric local MPC scheme \eqref{eq:DMPC_par}, each local value function depends on global parameters $\vect\zeta:=\left\{\vect\theta_1,\ldots,\vect\theta_m\right\}$ through optimal multipliers $\hat{\vect\mu}^\star\left(\vect\zeta\right)$ (each agent has full access to all multipliers through a communication network between local agents). Consequently, each local parametric policy $\vect\pi_i^{\vect\zeta}\left(\vect{\tilde s}^k,\hat{\vect\mu}^\star\right)$ delivered by \eqref{eq:DMPC_par} is also affected by the global parameters $\vect\zeta$ and its neighbor policies $\vect\pi_j^{\vect\zeta}\left(\vect{\tilde s}^k,\hat{\vect\mu}^\star\right)$ so that the corresponding local closed-loop control performance reads as $J_i\left(\vect\zeta\right):=J_i\left(\vect\pi^{\vect\zeta}\left(\vect{\tilde s}^k,\hat{\vect\mu}^\star\right)\right)$ with
\begin{align}\label{eq:_J}
J_i\left(\vect\zeta\right)=\mathbb{E}\Bigg[\sum_{k=0}^{\infty}\gamma^kL_i\left(\vect{\tilde s}^k,\vect a_i^k\right)\Bigg|\vect a^k=\vect \pi^{\vect\zeta}(\vect{\tilde s}^k,\hat{\vect\mu}^\star)\Bigg],
\end{align}
where $L_i\left(\vect{\tilde s}^k,\vect a_i^k\right)$ denotes a baseline cost for each agent.
\end{Remark}
An overview of the proposed learning-based DMPC scheme is illustrated in Fig. \ref{scheme}.
\begin{figure}[htbp!]
		\centering
		\includegraphics[width=.9\linewidth]{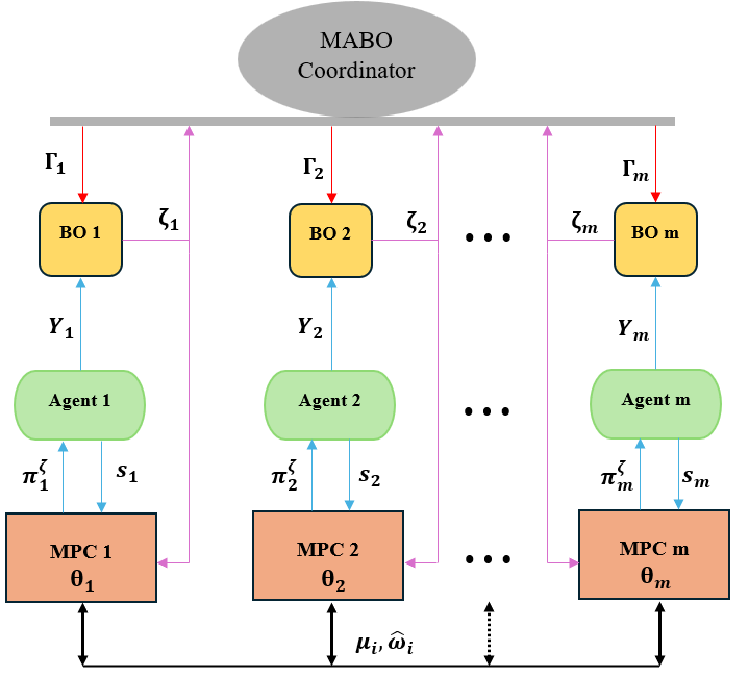}
		\caption{An overview of the proposed MABO-based DMPC.} 
		\label{scheme}
\end{figure}
To employ the closed-loop performance index $J_i\left(\vect\zeta\right)$ as a metric in the context of BO, we practically introduce a finite version of \eqref{eq:_J} as 
\begin{align}
    J_i^N\left(\vect\zeta\right)=\frac{1}{N}\sum_{k=0}^{N}L_i\left(\vect{ \tilde s}^k,\vect\pi_i^{\vect\zeta}\left(\vect{\tilde s}^k,\hat{\vect\mu}^\star\right)\right).
\end{align}
We then implement a coordinated multi-agent learning approach by minimizing the global closed-loop performance as
\begin{align}\label{eq:opt0}
    \min_{\vect\zeta} J^N\left(\vect\zeta\right):=\sum_{i=1}^m J_i^N\left(\vect\zeta\right).
\end{align}
Inspired by \cite{KRISHNAMOORTHY20232232}, we then adopt an ADMM-based MABO to provide a decomposable version of the global optimization problem above aligned with the structure of the proposed parametric DMPC scheme \eqref{eq:DMPC_par}. To this end, let us rewrite the problem \eqref{eq:opt0} as
\begin{subequations}\label{eq:tot_cost_mabo}
\begin{align}
    &\min_{\vect{\bar\zeta},\left\{\vect\zeta_i\right\}\in\mathcal{P}}\sum_{i=1}^mJ_i^N\left(\vect\zeta_i\right)\\
    &\text{s.t.}\quad \vect\zeta_i=\vect{\bar\zeta},\quad\forall i=1,\ldots,m.\label{eq:eq_const}
\end{align}
\end{subequations}
To derive the ADMM scheme for solving the problem above in a distributed manner, we first define the corresponding augmented Lagrangian as
\begin{align}
    \min_{\vect{\bar\zeta},\left\{\vect\zeta_i\right\}\in\mathcal{P}}\sum_{i=1}^mJ_i^N\left(\vect\zeta_i\right)+\vect\lambda_i^\top\left(\vect\zeta_i-\vect{\bar\zeta}\right)+\frac{\rho}{2} \left\|\vect\zeta_i-\vect{\bar\zeta} \right\|^2,
\end{align}
where $\rho>0$ is a constant penalty parameter and $\vect\lambda=\left(\vect\lambda_1,\ldots,\vect\lambda_m\right)$ denotes the set of dual variables (multipliers) associated with the equality constraint \eqref{eq:eq_const}. The local parameters are then obtained by the following local optimization problem
\begin{align}\label{eq:loc_BO}
    \vect\zeta_i^{k+1}=\argmin_{\vect\zeta_i\in\mathcal{P}} J_i\left(\vect\zeta_i\right)+(\vect\lambda_i^k)^\top\vect\zeta_i+\frac{\rho}{2} \left\|\vect\zeta_i-\vect{\bar\zeta}^{k+1} \right\|^2,
\end{align}
where the coordinating variables $\vect \Gamma_i^k:=(\vect{\bar\zeta}^{k+1},\vect\lambda_i^k)$ are updated by the following iterations
\begin{subequations}
\begin{align}
    &\vect{\bar\zeta}^{k+1}=\frac{1}{m}\sum_{i=1}^m\left[\vect\zeta_i^k+\frac{\vect\lambda_i^k}{\rho}\right],\\
    &\vect\lambda_i^{k+1}=\vect\lambda_i^k+\rho\left(\vect\zeta_i^{k+1}-\vect{\bar\zeta}^{k+1}\right).
\end{align}
\end{subequations}
To solve the problem \eqref{eq:loc_BO}, we then minimize the local acquisition function associated with $J_i\left(\vect\zeta_i\right)$ as
\begin{align}\label{eq:local}
    \vect\zeta_i^{k+1}=\argmin_{\vect\zeta_i\in\mathcal{P}_i} \Big(\alpha_{\text{EI}}(\vect\zeta_i)+\Xi_i\left(\vect\zeta_i,\vect \Gamma_i^k\right)\Big),
\end{align}
where the penalty term $\Xi_i$ is
\begin{align}
    \Xi_i\left(\vect\zeta_i,\vect \Gamma_i^k\right)=(\vect\lambda_i^k)^\top\vect\zeta_i+\frac{\rho}{2} \left\|\vect\zeta_i-\vect{\bar\zeta}^{k+1} \right\|^2.
\end{align}
Summary of the proposed multi-agent BO is given in Algorithm 1. The proposed learning-based DMPC is then outlined in Algorithm 2.

\begin{algorithm}
\caption{An ADMM-based multi-agent BO}
\begin{algorithmic}
\STATE \textbf{Input:}\\Parameters $\vect{P}_i^k,i=1,\ldots,m$,\\ Evaluations $\vect Y_i^k,i=1,\ldots,m$,\\ Coordinating variables $\vect \Gamma_i^k:=(\vect{\bar\zeta}^{k+1},\vect\lambda_i^k)$
\STATE \textbf{Output:}\\ $\left(\vect{P}_i^{k+1}, \vect Y_i^{k+1},\vect \Gamma_i^{k+1}\right) =\textbf{\textit{MABO}}\left(\vect{P}_i^{k}, \vect Y_i^{k},\vect \Gamma_i^k\right)$
    \STATE \# In parallel for $i=1,\ldots,m$
    \STATE 1. Fit Gaussian Process models to data $(\vect{P}_i^k, \vect Y_i^k)$
    \STATE 2. Compute $\alpha_{\text{EI}}(\vect\zeta_i)$ for $\vect\zeta_i \in \mathcal{P}_i$ (search space)
    \STATE 3. Compute $\vect{\bar\zeta}^{k+1}$ and $ \Xi_i\left(\vect\zeta_i,\vect \Gamma_i^k\right)$
    \STATE 4. Find $\vect{\zeta}_i^{\text{new}} = \argmin_{\vect\zeta_i\in\mathcal{P}_i} \Big(\alpha_{\text{EI}}(\vect\zeta_i)+\Xi_i\left(\vect\zeta_i,\vect \Gamma_i^k\right)\Big)$
    \STATE 5. Evaluate $y_i^{\text{new}} =J_i^N\left(\vect\zeta^{\text{new}}\right)$
    \STATE 6. Update multipliers $\vect\lambda_i^{k+1}=\vect\lambda_i^k+\rho\left(\vect{\zeta}_i^{\text{new}}-\vect{\bar\zeta}^{k+1}\right)$
    \STATE 7. Update data:\\ $(\vect{P}_i^{k+1}, \vect Y_i^{k+1}) = (\vect{P}_i^k \cup \vect{\zeta}_i^{\text{new}}, \vect Y_i^k \cup y_i^{\text{new}})$
\end{algorithmic}
\end{algorithm}

\begin{algorithm}
\caption{Coordinated learning of the DMPC schemes}
\begin{algorithmic}[1]
\STATE \textbf{Input:}\\
The number of episodes $K$\\
The number of simulation time steps $n$
\FOR{$k = 1$ to $K$}
\STATE Set $\vect s_i^1$ to initial conditions for $i=1,\ldots,m$
    \FOR{$j = 1$ to $n$}  \label{line:simulation_loop}
        \STATE Initialize the multipliers $\vect\mu\rightarrow\mathbf{0}$
        \WHILE{Stopping Criteria are not satisfied}  \label{line:communication_loop}
            \STATE Send/receive $\vect\mu_i$ 
            \STATE MPCs compute the local solutions given $\vect\mu$ 
            \STATE Update $\vect\mu_i,i=1,\ldots,m$ using \eqref{eq:update_mu}
        \ENDWHILE
        \STATE Apply $\vect{a}_i^j=\vect\pi_i^{\vect\theta_i}\left(\vect {\tilde s}^j,\hat{\vect\mu}^\star\right),i=1,\ldots,m$
        \STATE Agents return $\vect{s}_i^{j+1} = \vect{f}_i(\vect{s}_i^j,\,\vect{a}_i^j)$
    \ENDFOR
    \STATE $\left(\vect{P}_i^{k+1}, \vect Y_i^{k+1},\vect \Gamma_i^{k+1}\right) =\textbf{\textit{MABO}}\left(\vect{P}_i^{k}, \vect Y_i^{k},\vect \Gamma_i^k\right)$
    \STATE Update $\vect\theta_i\leftarrow\vect\zeta_i\leftarrow\vect{P}_i^{k+1},i=1,\ldots,m$
\ENDFOR
\end{algorithmic}
\end{algorithm}

\subsection{Convergence Analysis of MABO}
To investigate the convergence of the ADMM-based MABO, we rely on the standard ADMM convergence guarantees while accounting for the properties of the acquisition function $\alpha_{\text{EI}}$. 

\begin{Lemma}
The acquisition function $\alpha_{\text{EI}}(\vect{\zeta}_i)$ defined in \eqref{eq:acq_EI} is differentiable and satisfies Lipschitz continuity such that
\begin{align}
    \|\nabla \alpha_{\text{EI}}(\vect{\zeta}_i) - \nabla \alpha_{\text{EI}}(\vect{\zeta}_j)\| 
    \leq L \|\vect{\zeta}_i - \vect{\zeta}_j\|,
\end{align}
where $L$ is the Lipschitz constant.
\end{Lemma}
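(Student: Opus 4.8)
The plan is to establish the differentiability of $\alpha_{\text{EI}}$ by inspecting its closed form \eqref{eq:acq_EI} and then bound the norm of its gradient (equivalently, establish local Lipschitzness of the gradient on the compact search space $\mathcal{P}_i$). First I would recall the standard facts about the GP posterior: since $J_i$ is modeled by a GP with a squared-exponential kernel, the posterior mean $\mu(\vect\zeta_i)$ and posterior standard deviation $\sigma(\vect\zeta_i)$ are both smooth ($C^\infty$) functions of $\vect\zeta_i$ on $\mathcal{P}_i$, being compositions of the (infinitely differentiable) kernel with fixed linear-algebraic operations involving $(K+\sigma_n^2 I)^{-1}$; moreover $\sigma(\vect\zeta_i)$ is bounded below by a strictly positive constant once noise $\sigma_n^2>0$ is present, or can be taken so on the relevant region, which keeps $Z = (J^\star-\mu(\vect\zeta_i))/\sigma(\vect\zeta_i)$ smooth.

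The key computational step is to differentiate the expression
\begin{align}\label{eq:EI_rewrite}
  \alpha_{\text{EI}}(\vect\zeta_i) = -\Big( (J^\star - \mu(\vect\zeta_i))\,\Phi(Z) + \sigma(\vect\zeta_i)\,\phi(Z)\Big)
\end{align}
and observe the well-known cancellation: using $\Phi'=\phi$, $\phi'(Z) = -Z\phi(Z)$, and the identity $J^\star-\mu(\vect\zeta_i) = Z\,\sigma(\vect\zeta_i)$, the chain rule gives
\begin{align}\label{eq:grad_EI}
  \nabla\alpha_{\text{EI}}(\vect\zeta_i) = \Phi(Z)\,\nabla\mu(\vect\zeta_i) - \phi(Z)\,\nabla\sigma(\vect\zeta_i),
\end{align}
so all the $\nabla Z$ terms drop out. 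This makes $\nabla\alpha_{\text{EI}}$ manifestly continuous on $\mathcal{P}_i$: $\Phi,\phi$ are bounded and Lipschitz on $\mathbb{R}$, $Z$ is continuous, and $\nabla\mu,\nabla\sigma$ are continuous (indeed $C^\infty$) on $\mathcal{P}_i$. Then I would differentiate \eqref{eq:grad_EI} once more, or simply argue that $\nabla\alpha_{\text{EI}}$ is itself $C^1$ (again a composition of smooth GP quantities with smooth scalar functions), hence its Hessian is continuous and therefore bounded on the compact set $\mathcal{P}_i$; the mean value theorem then yields the global Lipschitz bound $\|\nabla\alpha_{\text{EI}}(\vect\zeta_i)-\nabla\alpha_{\text{EI}}(\vect\zeta_j)\|\le L\|\vect\zeta_i-\vect\zeta_j\|$ with $L := \sup_{\vect\zeta\in\mathcal{P}_i}\|\nabla^2\alpha_{\text{EI}}(\vect\zeta)\|$. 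Compactness of $\mathcal{P}_i$ is the natural hypothesis here, and I would state it explicitly (it is implicit in treating $\mathcal{P}$ as a bounded search space in the BO setup).

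The main obstacle is the potential degeneracy of $\sigma(\vect\zeta_i)$: at any previously sampled point $\vect\zeta_i^\ell\in\vect P_i^k$ the posterior variance vanishes in the noiseless interpolation limit, which would make $Z$ blow up and threaten both differentiability and the uniform gradient bound. I would handle this by invoking the noise term $\sigma_n^2>0$ (the paper's model has $\epsilon^i\sim\mathcal N(0,\sigma_n^2)$), under which $\sigma^2(\vect\zeta_i) = k(\vect\zeta_i,\vect\zeta_i) - k(\vect\zeta_i,\vect\Theta)(K+\sigma_n^2 I)^{-1}k(\vect\Theta,\vect\zeta_i)$ stays bounded away from $0$ uniformly on $\mathcal{P}_i$ (this is a standard consequence of positive-definiteness of the regularized kernel matrix), so $Z$ and all the derived quantities remain smooth; the cancellation in \eqref{eq:grad_EI} shows that even where $\sigma$ is small the gradient expression is benign since it no longer contains $\nabla Z$. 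Secondarily, one should note that adding the smooth quadratic penalty $\Xi_i(\vect\zeta_i,\vect\Gamma_i^k) = (\vect\lambda_i^k)^\top\vect\zeta_i + \tfrac{\rho}{2}\|\vect\zeta_i-\vect{\bar\zeta}^{k+1}\|^2$ preserves the Lipschitz-gradient property (its gradient is affine, hence Lipschitz with constant $\rho$), so the full local objective in \eqref{eq:local} inherits a Lipschitz-continuous gradient with constant $L+\rho$ — the form needed for the subsequent ADMM convergence argument.
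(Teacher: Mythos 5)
Your proof follows essentially the same route as the paper's: differentiate the closed form of $\alpha_{\text{EI}}$, exploit the standard cancellation that reduces the gradient to $\Phi(Z)\nabla\mu - \phi(Z)\nabla\sigma$ (the paper states the equivalent partials $\partial\alpha_{\text{EI}}/\partial\mu$ and $\partial\alpha_{\text{EI}}/\partial\sigma$ directly), and conclude Lipschitz continuity of the gradient from the smoothness of the GP posterior mean and standard deviation. You are in fact more careful than the paper on two points it glosses over---bounding $\sigma(\vect\zeta_i)$ away from zero via the observation noise $\sigma_n^2>0$ so that $Z$ stays well-defined, and invoking compactness of $\mathcal{P}_i$ to convert a continuous Hessian into a global Lipschitz constant---so the argument is sound and, if anything, tightens the paper's reasoning.
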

\begin{proof}
Consider the acquisition function $\alpha_{\text{EI}}$ as defined in \eqref{eq:acq_EI}. Since $\alpha_{\text{EI}}(\vect{\zeta}_i)$ is a composition of differentiable functions, it is differentiable. The gradient of $\alpha_{\text{EI}}(\vect{\zeta}_i)$ with respect to $\vect{\zeta}_i$ is obtained as
\begin{align}
\nabla \alpha_{\text{EI}}(\vect{\zeta}_i) = \frac{\partial \alpha_{\text{EI}}}{\partial \mu} \nabla \mu(\vect{\zeta}_i) + \frac{\partial \alpha_{\text{EI}}}{\partial \sigma} \nabla \sigma(\vect{\zeta}_i),
\end{align}
where $\frac{\partial \alpha_{\text{EI}}}{\partial \mu} = -\Phi(Z)$ and $\frac{\partial \alpha_{\text{EI}}}{\partial \sigma} = \phi(Z)$. $\nabla \mu(\vect{\zeta}_i)$ and $\nabla \sigma(\vect{\zeta}_i)$ are the gradients of the GP posterior mean and standard deviation, respectively. Both $\nabla \mu(\vect{\zeta}_i)$ and $\nabla \sigma(\vect{\zeta}_i)$ are then bounded due to the smoothness of GP posterior predictions. Therefore, $\nabla \alpha_{\text{EI}}(\vect{\zeta}_i)$ exists and is well-defined. To establish the Lipschitz continuity of $\nabla \alpha_{\text{EI}}(\vect{\zeta}_i)$, we then compute the difference between gradients for two points $\vect{\zeta}_i$ and $\vect{\zeta}_j$ as
\begin{align}
\|\nabla \alpha_{\text{EI}}(\vect{\zeta}_i) - \nabla \alpha_{\text{EI}}(\vect{\zeta}_j)\| 
\leq \|\Delta_\mu\| + \|\Delta_\sigma\|,
\end{align}
where
\[
\Delta_\mu = \left(\frac{\partial \alpha_{\text{EI}}}{\partial \mu} \nabla \mu(\vect{\zeta}_i) - \frac{\partial \alpha_{\text{EI}}}{\partial \mu} \nabla \mu(\vect{\zeta}_j)\right),
\]
\[
\Delta_\sigma = \left(\frac{\partial \alpha_{\text{EI}}}{\partial \sigma} \nabla \sigma(\vect{\zeta}_i) - \frac{\partial \alpha_{\text{EI}}}{\partial \sigma} \nabla \sigma(\vect{\zeta}_j)\right).
\]
Using the smoothness of $\Phi(Z)$ and $\nabla \mu(\vect{\zeta})$, there exists a constant $L_\mu$ such that $\|\Delta_\mu\| \leq L_\mu \|\vect{\zeta}_i - \vect{\zeta}_j\|$. Similarly, using the smoothness of $\phi(Z)$ and $\nabla \sigma(\vect{\zeta})$, there exists a constant $L_\sigma$ such that $\|\Delta_\sigma\| \leq L_\sigma \|\vect{\zeta}_i - \vect{\zeta}_j\|$. Combining the bounds for $\Delta_\mu$ and $\Delta_\sigma$, we then have that
\begin{align} 
\|\nabla \alpha_{\text{EI}}(\vect{\zeta}_i) - \nabla \alpha_{\text{EI}}(\vect{\zeta}_j)\| \leq (L_\mu + L_\sigma) \|\vect{\zeta}_i - \vect{\zeta}_j\|.
\end{align}
Considering $L = L_\mu + L_\sigma$, the proof is complete. The acquisition function $\alpha_{\text{EI}}(\vect{\zeta}_i)$ is then differentiable, and its gradient satisfies the Lipschitz continuity condition with the Lipschitz constant $L = L_\mu + L_\sigma$.
\end{proof}

\begin{Lemma}
The acquisition function $\alpha_{\text{EI}}$ defined in \eqref{eq:acq_EI} is monotonically decreasing with respect to the exploration term $\alpha_1=\sigma\left(\vect\theta\right)\in[0,1]$ and the exploitation term $\alpha_2=J^\star-\mu\left(\vect\theta\right)$. 
\end{Lemma}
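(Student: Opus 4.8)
The plan is to make the dependence of $\alpha_{\text{EI}}$ on the two scalars $\alpha_1=\sigma(\vect\theta)$ and $\alpha_2=J^\star-\mu(\vect\theta)$ fully explicit and then differentiate. With $Z=\alpha_2/\alpha_1$, expression \eqref{eq:acq_EI} reads
\begin{align*}
\alpha_{\text{EI}}=-\bigl(\alpha_2\,\Phi(Z)+\alpha_1\,\phi(Z)\bigr),
\end{align*}
so it suffices to establish $\partial\alpha_{\text{EI}}/\partial\alpha_2\le 0$ and $\partial\alpha_{\text{EI}}/\partial\alpha_1\le 0$ on the domain $\alpha_1>0$. The only facts needed are the standard Gaussian identities $\Phi'(Z)=\phi(Z)$ and $\phi'(Z)=-Z\,\phi(Z)$, together with the care that $Z$ depends on \emph{both} arguments, so the chain rule contributes through $\partial Z/\partial\alpha_2=1/\alpha_1$ and $\partial Z/\partial\alpha_1=-\alpha_2/\alpha_1^2$; the pleasant feature is that the extra terms cancel.

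Executing this, one finds
\begin{align*}
\frac{\partial\alpha_{\text{EI}}}{\partial\alpha_2}
&=-\Bigl(\Phi(Z)+\tfrac{\alpha_2}{\alpha_1}\phi(Z)+\phi'(Z)\Bigr)\\
&=-\bigl(\Phi(Z)+Z\phi(Z)-Z\phi(Z)\bigr)=-\Phi(Z)\le 0,
\end{align*}
and, analogously,
\begin{align*}
\frac{\partial\alpha_{\text{EI}}}{\partial\alpha_1}
&=-\Bigl(\phi(Z)-Z^2\phi(Z)-Z\,\phi'(Z)\Bigr)\\
&=-\bigl(\phi(Z)-Z^2\phi(Z)+Z^2\phi(Z)\bigr)=-\phi(Z)\le 0.
\end{align*}
Since $\Phi$ and $\phi$ are strictly positive, both partial derivatives are in fact strictly negative whenever $\alpha_1>0$; hence $\alpha_{\text{EI}}$ is monotonically decreasing in each of the exploration term $\alpha_1$ and the exploitation term $\alpha_2$, which is the claim.

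The one point needing extra care is the degenerate boundary $\alpha_1\to 0^+$, where $Z$ and the formulas above are undefined; there I would pass to the limit, using $\Phi(Z)\to1,\ \phi(Z)\to0$ as $Z\to+\infty$ and the exponential decay of $\alpha_2\Phi(\alpha_2/\alpha_1)$ and $\alpha_1\phi(\alpha_2/\alpha_1)$ as $Z\to-\infty$, to obtain $\alpha_{\text{EI}}\to-\max(\alpha_2,0)$, and then note that this limiting value is continuous and consistent with the monotonicity already shown for $\alpha_1>0$. I do not anticipate a real obstacle here: the substance of the lemma is the two short derivative computations above, with the only genuine care being the chain-rule bookkeeping through $Z$ and the signs of $\Phi$ and $\phi$. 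A fully calculus-free check of the same fact is also available: writing $-\alpha_{\text{EI}}=\mathbb{E}\bigl[(\alpha_2+\alpha_1\xi)^+\bigr]$ with $\xi\sim\mathcal{N}(0,1)$ (using that $J(\vect\theta)\sim\mathcal{N}(\mu,\sigma^2)$ and that $\xi$ is symmetric), this is pointwise nondecreasing in $\alpha_2$, and, being convex and even in $\alpha_1$, it is minimized at $\alpha_1=0$ and nondecreasing for $\alpha_1\ge0$, giving the stated monotonicity.
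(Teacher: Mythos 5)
Your proof is correct and takes essentially the same route as the paper: both differentiate $\alpha_2\Phi(\alpha_2/\alpha_1)+\alpha_1\phi(\alpha_2/\alpha_1)$ using the identities $\Phi'=\phi$ and $\phi'(Z)=-Z\phi(Z)$, and both arrive at the partial derivatives $-\Phi(Z)$ and $-\phi(Z)$ for $\alpha_{\text{EI}}$. Your added treatment of the boundary $\alpha_1\to 0^+$ and the expectation-based sanity check $-\alpha_{\text{EI}}=\mathbb{E}\bigl[(\alpha_2+\alpha_1\xi)^+\bigr]$ go slightly beyond what the paper records, but the core argument is identical.
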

\begin{proof}
Let us rewrite \eqref{eq:acq_EI} as
\begin{align}
  \alpha_{\text{EI}}(\cdot) =-\beta(\cdot),  
\end{align}
where $\beta(\cdot)=\alpha_2\Phi\left(\frac{\alpha_2}{\alpha_1}\right)+\alpha_1\phi\left(\frac{\alpha_2}{\alpha_1}\right)$. To show that $ \alpha_{\text{EI}}(\cdot)$ is monotonically decreasing, one needs to show that $\beta(\cdot)$ is monotonically increasing. To this end, we take the derivative of $\beta(\cdot)$ w.r.t. $\alpha_1$ and $\alpha_2$ as
\begin{align}\label{eq:frac1}
    &\frac{\partial\beta}{\partial\alpha_1}=\\\nonumber
    &-\alpha_2\Phi^'\left(\frac{\alpha_2}{\alpha_1}\right)\left(\frac{\alpha_2}{\alpha_1^2}\right)+\phi\left(\frac{\alpha_2}{\alpha_1}\right)-\alpha_1\phi^'\left(\frac{\alpha_2}{\alpha_1}\right)\left(\frac{\alpha_2}{\alpha_1^2}\right),
\end{align}
\begin{align}\label{eq:frac2}
    \frac{\partial\beta}{\partial\alpha_2}=\Phi\left(\frac{\alpha_2}{\alpha_1}\right)+\left(\frac{\alpha_2}{\alpha_1}\right)\Phi^'\left(\frac{\alpha_2}{\alpha_1}\right)+\phi^'\left(\frac{\alpha_2}{\alpha_1}\right).
\end{align}
Let us consider the following properties of $\Phi$ and $\phi$:
\begin{align}\label{eq:prop_cdf_pdf}
\phi^'\left(x\right)=-x\phi\left(x\right),\quad\Phi^'\left(x\right)=\phi\left(x\right).
\end{align}
By leveraging the properties in \eqref{eq:prop_cdf_pdf}, we can simplify the derivatives \eqref{eq:frac1} and \eqref{eq:frac2} such that the following holds:
\begin{align}
    \frac{\partial\beta}{\partial\alpha_1}=\phi\left(\frac{\alpha_2}{\alpha_1}\right)>0,\quad \frac{\partial\beta}{\partial\alpha_2}=\Phi\left(\frac{\alpha_2}{\alpha_1}\right)>0.
\end{align}
It then follows that $\alpha_{\text{EI}}(\cdot) =-\beta(\cdot)$ is monotonically decreasing since $\beta(\cdot)$ is monotonically increasing.
\end{proof}
Considering the optimization problem \eqref{eq:tot_cost_mabo}, the corresponding augmented Lagrangian is given by
\begin{align}
    &\mathcal{L}(\vect{\zeta}_i, \vect{\bar\zeta}, \vect{\lambda}_i) 
    =\\\nonumber
    &\sum_{i=1}^m \Big( \alpha_{\text{EI}}(\vect\zeta_i) 
    + \vect{\lambda}_i^\top (\vect{\zeta}_i - \vect{\bar\zeta}) 
    + \frac{\rho}{2} \|\vect{\zeta}_i - \vect{\bar\zeta}\|^2 \Big).
\end{align}
By Lemma 2, the augmented Lagrangian $\mathcal{L}$ decreases monotonically under the updates:
\begin{align}
    \mathcal{L}(\vect{\zeta}_i^{k+1}, \vect{\bar\zeta}^{k+1}, \vect{\lambda}_i^{k+1}) 
    \leq \mathcal{L}(\vect{\zeta}_i^k, \vect{\bar\zeta}^k, \vect{\lambda}_i^k),
\end{align}
with equality only at the optimal solution. By Lemma 1, $\nabla \alpha_{\text{EI}}$ then exists so that taking the gradient of the augmented Lagrangian with respect to \(\vect{\zeta}_i\) gives the necessary condition for the optimality of \(\vect{\zeta}_i^{k+1}\) as
\begin{align}
    \nabla \alpha_{\text{EI}}(\vect{\zeta}_i^{k+1}) + \vect{\lambda}_i^k + \rho (\vect{\zeta}_i^{k+1} - \vect{\bar\zeta}^k) = 0.
\end{align}
At convergence, we assume that \(\vect{\zeta}_i^k \to \vect{\zeta}_i^\star\) and \(\vect{\bar\zeta}^k \to \vect{\bar\zeta}^\star\), so the stationarity condition for \(\vect{\zeta}_i^\star\) becomes
\begin{align}
    \nabla \alpha_{\text{EI}}(\vect{\zeta}_i^\star) + \vect{\lambda}_i^\star + \rho (\vect{\zeta}_i^\star - \vect{\bar\zeta}^\star) = 0.
\end{align}

\begin{Remark}
Convergence analysis of the non-convex ADMM is generally a challenging problem due to the absence of Fejér monotonicity \cite{yuan2025admmnonconvexoptimizationminimal}. One approach can be to instead prove that the non-convex $\text{EI}$ function \eqref{eq:acq_EI} is monotonically decreasing and then use Lemmas 1 and 2 to establish two properties of $\alpha_{\text{EI}}$ to facilitate the convergence analysis of the ADMM-based MABO. This is left as a topic for future work.

\end{Remark}


\section{Numerical Examples} \label{sec:5}

To examine the viability of the proposed learning-based distributed MPC scheme, we consider two numerical examples: 1) heterogeneous linear multi-agent systems with simple dynamics, and 2) formation control problem for multiple Wheeled Mobile Robots (WMRs).

\subsection*{Example 1: \textit{Linear Multi-agent System}}

In this example, we consider three linear time-invariant (LTI) systems with different dynamics that must satisfy their local constraint and a coupled equality constraint (which is a desired distance between their first states) in a distributed manner. However, the local state constraints and the coupling constraints would be violated due to the model mismatch and disturbances. Let us consider three agents with the following dynamics
\begin{subequations}
\begin{align}
    &\vect x_1^{k+1}=\begin{bmatrix}
0.9 &0.35 \\ 
 0&1.1 
\end{bmatrix}\vect x_1^{k}+\begin{bmatrix}
0.0813\\0.2 
\end{bmatrix}u_1^{k}+\begin{bmatrix}
e_1^k\\0
\end{bmatrix},\\
&\vect x_2^{k+1}=\begin{bmatrix}
0.91 &0.33 \\ 
 0&0.98 
\end{bmatrix}\vect x_2^{k}+\begin{bmatrix}
0.0611\\0.23 
\end{bmatrix} u_2^{k},\\
&\vect x_3^{k+1}=\begin{bmatrix}
0.88 &0.3 \\ 
 0&1.1 
\end{bmatrix}\vect x_3^{k}+\begin{bmatrix}
0.0837\\0.21 
\end{bmatrix} u_3^{k},
\end{align}
\end{subequations}
and choose an imperfect model for the local MPC schemes as
\begin{align}\label{eq:imp_model}
\vect x^{k+1}=\begin{bmatrix}
1 &0.25 \\ 
 0&1 
\end{bmatrix}\vect x^{k}+\begin{bmatrix}
0.0312\\0.25 
\end{bmatrix} u^{k},
\end{align}
where the disturbance $e_1^{k}$ is random, uncorrelated and uniformly distributed in the interval $\left[-0.1,0\right]$. Let us label the states of the agents as $\vect x_1=\left[x_{1,1},x_{1,2}\right]^\top$, $\vect x_2=\left[x_{2,1},x_{2,2}\right]^\top$ and $\vect x_3=\left[x_{3,1},x_{3,2}\right]^\top$. We then consider the local constraints $0\leq x_{1,1}\leq 0.5$, $0\leq x_{2,1}\leq 2$ and $-2\leq x_{3,1}\leq 0$. The control input constraint $-0.5\leq u_{1,2,3}\leq 0.5$ is considered for all agents, and coupling constraints are defined as relative distances $d_{12}=1.5$, $d_{13}=1.5$ and $d_{23}=3$ (note that $d_{ij}=-d_{ji}$). The local cost function for each agent is defined as $
    L_i(\vect x^k,\vect u_i^k)=
    10\left \|\vect x_i^c\right \|_2^2
    +l_i\left(\vect x_i^k,\vect u_i^k\right)+\vect p^\top\cdot\text{max}\left(0,h_i\left(\vect x_i^k, \vect u_i^k\right)\right)$, where 
$\vect x_i^c:=\left[x_{i,1}-x_{j,1}-d_{i,j}\right]_{j\neq i}\in\mathbb{R}^{m-1}$. The local cost function $l_i$ can be, for instance, a quadratic function, and the penalty vector is set to $\vect p=\left[100,100\right]$.

As observed from Fig. \ref{fig_s1}, the coupling constraints are not satisfied for a DMPC without learning as the desired distances shown in green color cannot be captured by the actual distances shown in cyan color. These coupling constraints can be however satisfied by invoking the proposed learning-based DMPC, shown in red color.
\begin{figure}[htbp!]
\centering
\includegraphics[width=1\linewidth]{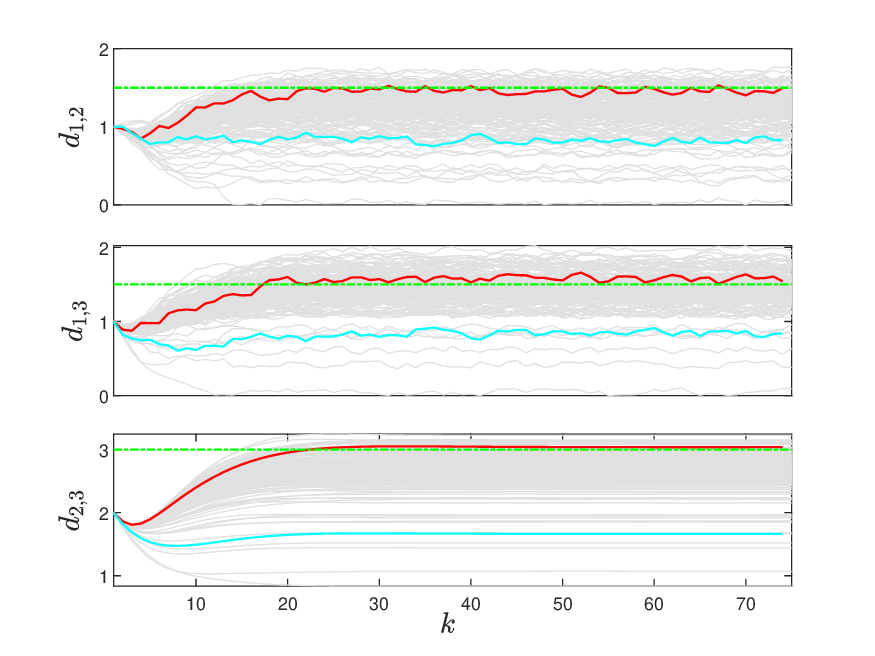}
\caption{The gray lines show the evolution of the actual distances during the learning process. The green lines show the desired distances between the first states of agents. The cyan lines show the evolution of the actual distances when a conventional DMPC without learning is used. The red lines show the evolution of the actual distances using a learned DMPC.} 
\label{fig_s1}
\end{figure}

Fig. \ref{fig_s2} shows the coupling states, which are the first states of the three agents. To make the situation more challenging for the first agent, we simultaneously set a constraint and a desired reference to zero so that the state $x_{1,1}$ must be as close as possible to $0$ while avoiding the yellow unsafe zone. As observed, in the results obtained from the conventional DMPC, the first agent violates the constraint at $\textit{zero}$ due to the disturbance $e_1$ while this constraint violation disappears, and the agent keeps its first state $x_{1,1}$ close to $0$ using the proposed MABO-based DMPC. The three control inputs are shown in Fig. \ref{fig_s3}.
\begin{figure}[htbp!]
\centering
\includegraphics[width=1\linewidth]{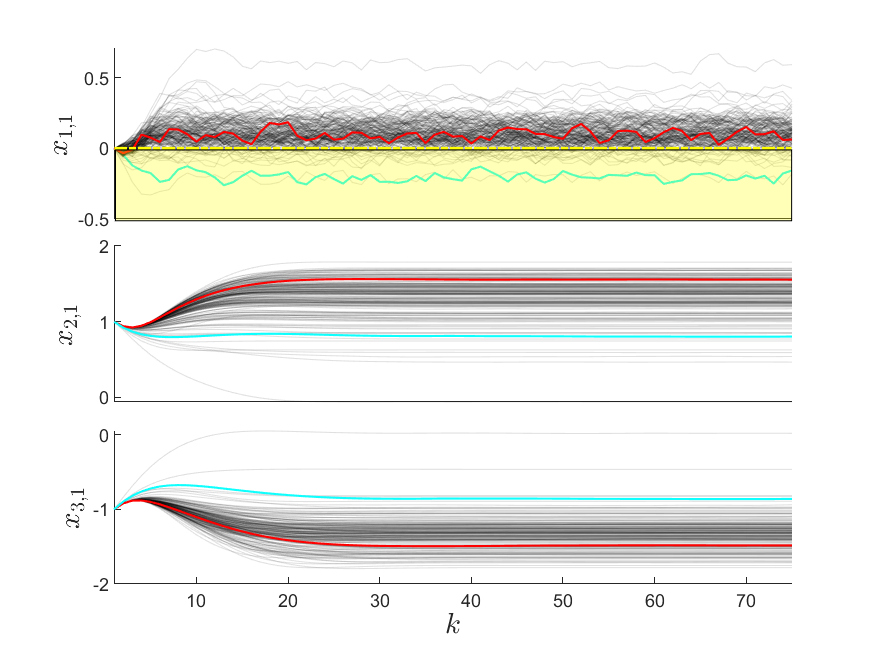}
\caption{The gray lines show the evolution of the first states during the learning process. For the first state of the agent $1$, we consider the point $0$ as reference and constraint simultaneously. The yellow region shows the unsafe zone. The cyan lines show the evolution of the first states when a conventional DMPC without learning is used. The red lines show the evolution of the first states using the proposed learned DMPC scheme.} 
\label{fig_s2}
\end{figure}

\begin{figure}[htbp!]
\centering
\includegraphics[width=1\linewidth]{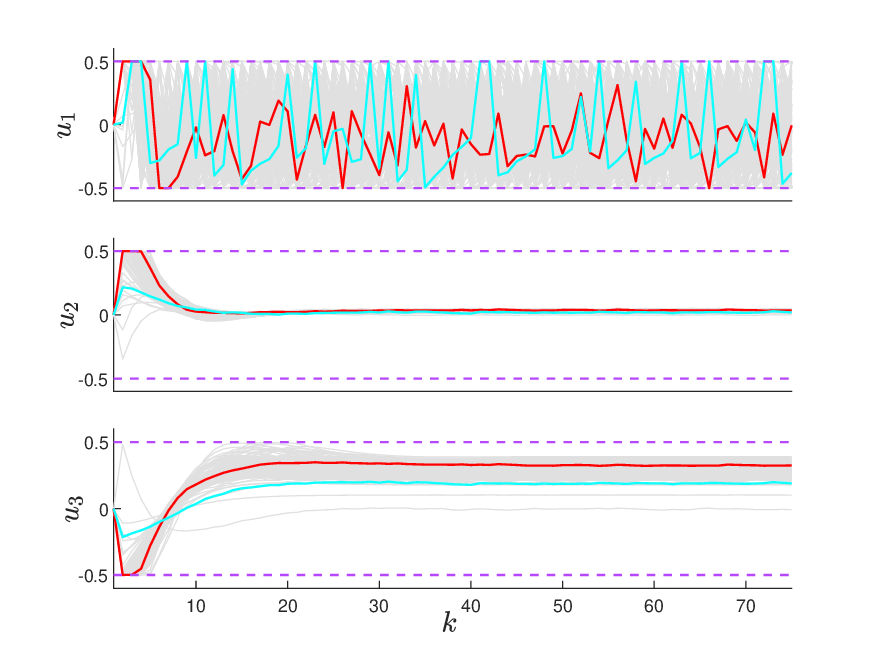}
\caption{The gray lines show the evolution of control signals during the learning process. The cyan lines show the results when a conventional DMPC without learning is used while the red lines show the results obtained from using the learned DMPC.} 
\label{fig_s3}
\end{figure}
As observed in Fig. \ref{fig_s4}, the proposed coordinated learning method using MABO outperforms conventional BO without coordination in achieving the best closed-loop performance for each local MPC scheme. Consequently, we observe that the learned local MPC schemes using BO without coordination cannot perfectly handle the coupling constraints, as shown in Fig. \ref{fig_s5}. Moreover, the BO-DMPC cannot drive the first state of the first agent $x_{1,1}$ to the reference point $0$ accurately, as shown in Fig. \ref{fig_s6}.
\begin{figure}[htbp!]
\centering
\includegraphics[width=1\linewidth]{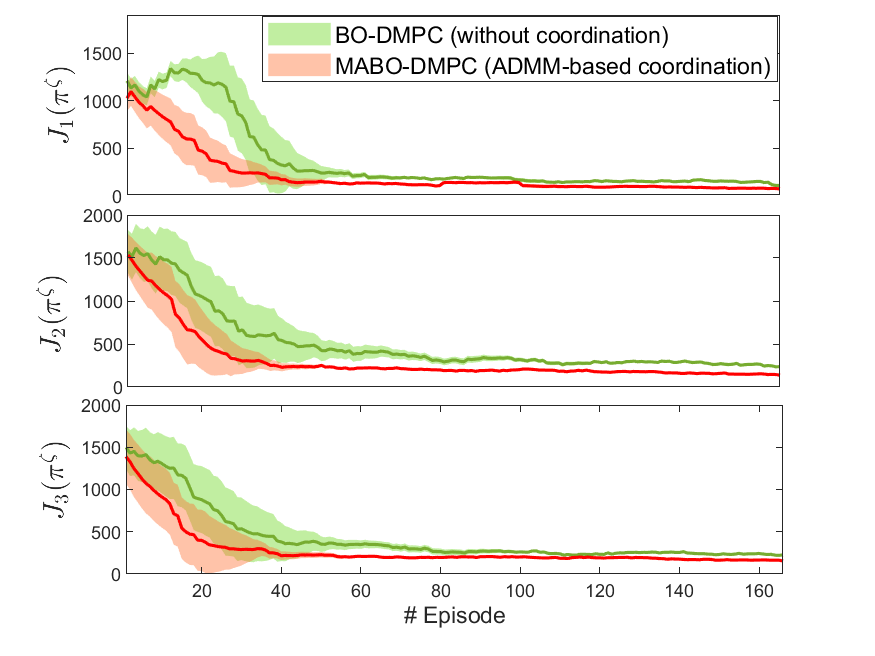}
\caption{The evolution of the local closed-loop performance of agents during the learning process is shown comparing the proposed multi-agent BO (MABO) and conventional BO combined with DMPC.} 
\label{fig_s4}
\end{figure}

\begin{figure}[htbp!]
\centering
\includegraphics[width=1\linewidth]{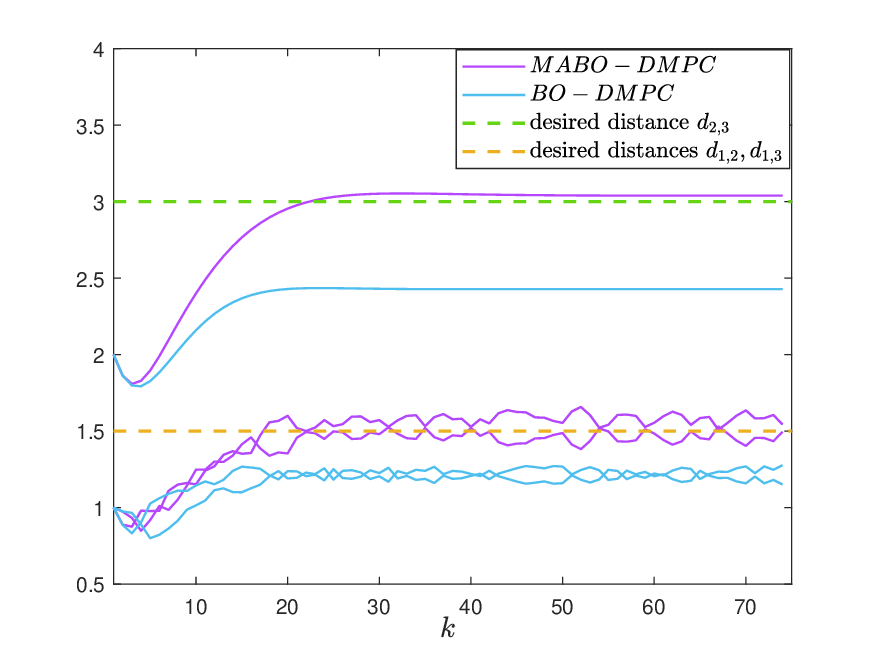}
\caption{The evolution of the coupling constraints is shown here comparing the proposed MABO and conventional BO combined with DMPC.} 
\label{fig_s5}
\end{figure}

\begin{figure}[htbp!]
\centering
\includegraphics[width=1\linewidth]{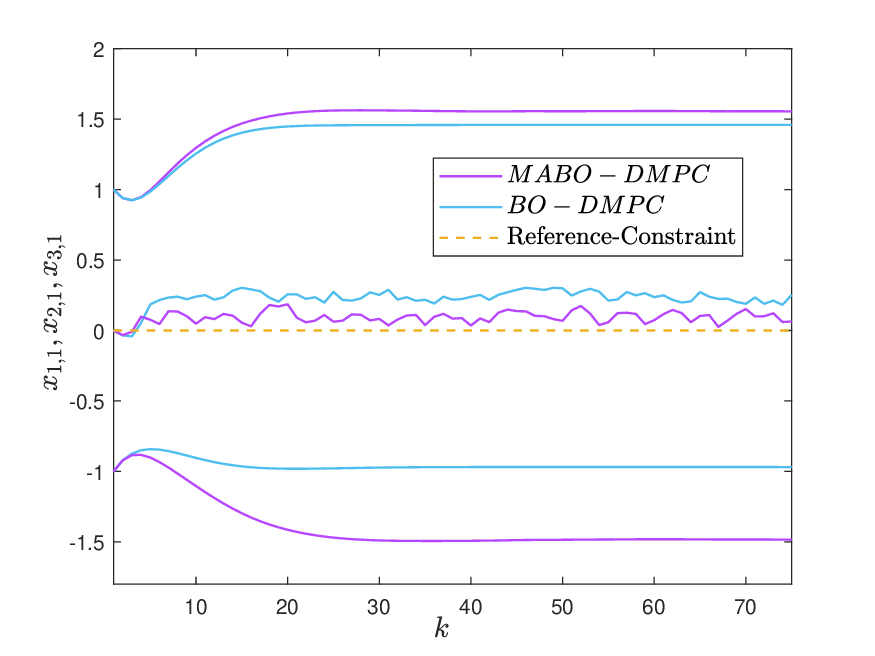}
\caption{The evolution of the first states is shown here comparing the proposed MABO and conventional BO combined with DMPC.} 
\label{fig_s6}
\end{figure}

\subsection*{Example 2: \textit{Formation Control}}

We define the WMR model as
\begin{align}\label{eq:WMR_c}
	&\vect f(\vect x, \vect u)=\begin{bmatrix}
		\cos(\psi) &0 \\ 
		\sin(\psi)& 0\\ 
		0&1 
	\end{bmatrix}\vect u,
\end{align}
where $\vect x=\left[x,y,\psi\right]^\top$ and $\vect u=\left[v,\omega\right]^\top$ are the state and control input vectors, respectively. The position coordinates of the WMR are labeled $x, y$, and $\psi$ is the angle of orientation of the robot. The control inputs $v$ and $\omega$ are the linear and angular velocities, respectively. To discretize the above continuous model, we use a fourth-order Runge-Kutta (RK4) integrator that provides the discretized function $\vect f_d$ of the WMR model as $\vect x(k+1)=\vect f_d\left(\vect x(k),\vect u(k)\right)$. We use two imperfect models of the real system for the second and third local MPC schemes. These two models are affected by the constant uncertainty in their control inputs such that we have the two imperfect models of \eqref{eq:WMR_c} as $\vect f_2\left(\vect x_2,d_2\vect u_2\right)$ and $\vect f_3\left(\vect x_3,d_3\vect u_3\right)$. We select $d_2=0.2,d_3=1.5$ and consider that three WMRs are required to achieve a triangular formation. As observed in Fig. \ref{fig_s7}, the DMPC scheme \eqref{eq:modif_DMPC} without learning cannot achieve the target triangular formation due to the uncertainties in the models that second and third WMRs employ. We then learn the parametric local MPC cost functions using the proposed MABO in order to approach the target formation as shown in Fig. \ref{fig_s8}. As observed in Fig. \ref{fig_s9}, we run the proposed MABO-DMPC for $360$ episodes to achieve the best local closed-loop performance leading to the desired triangular formation.
\begin{figure}[htbp!]
\centering
\includegraphics[width=0.85\linewidth]{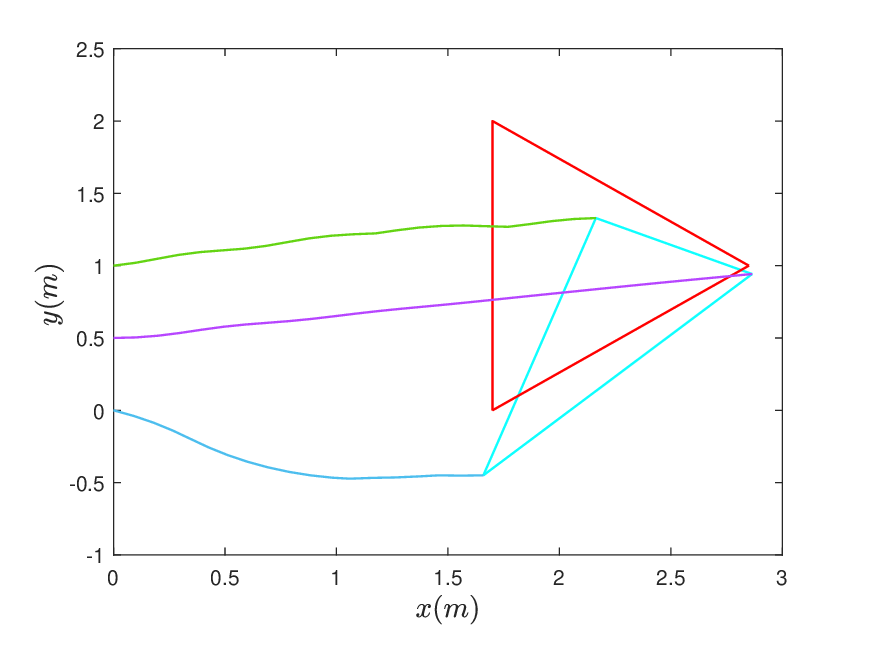}
\caption{Formation control using the proposed DMPC scheme \eqref{eq:modif_DMPC} with wrong MPC models $\vect f_2$ and $\vect f_3$. The red triangle shows the target formation while the cyan triangle is the actual formation. The purple, green, and blue lines show the trajectories of WMR 1, WMR 2, and WMR 3, respectively.} 
\label{fig_s7}
\end{figure}

\begin{figure}[htbp!]
\centering
\includegraphics[width=1\linewidth]{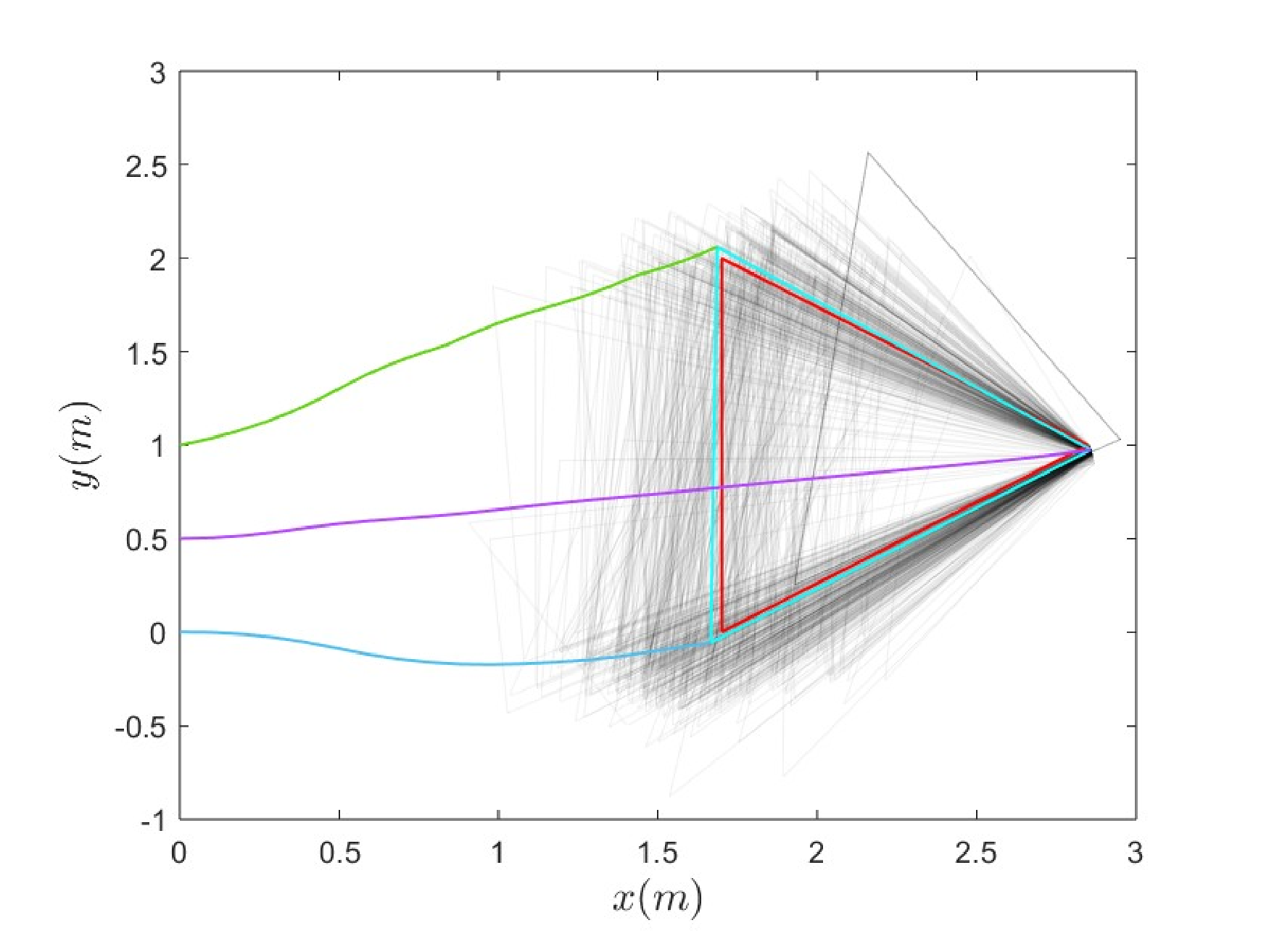}
\caption{Formation control using the proposed learning-based parametric DMPC scheme \eqref{eq:DMPC_par} with wrong MPC models $\vect f_2$ and $\vect f_3$. The red triangle shows the target formation while the cyan triangle is the actual formation achieved after learning. The purple, green, and blue lines show the trajectories of WMR 1, WMR 2, and WMR 3, respectively. The gray lines show the evolution of the triangular formation during the MABO-based learning process.} 
\label{fig_s8}
\end{figure}

\begin{figure}[htbp!]
\centering
\includegraphics[width=0.9\linewidth]{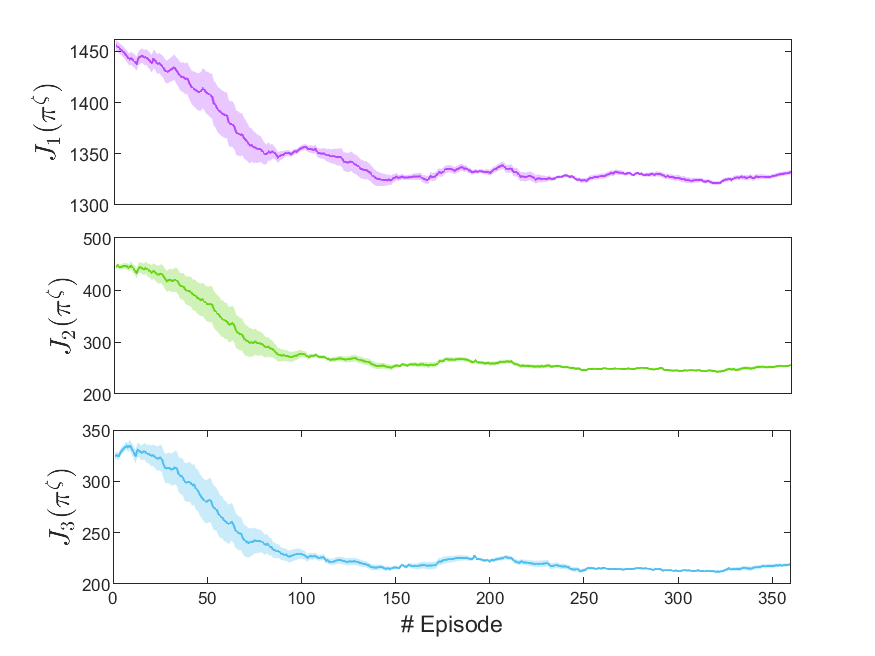}
\caption{The evolution of the local closed-loop performance (cost) of WMRs during the learning process.} 
\label{fig_s9}
\end{figure}


\section{Concluding Remarks} \label{sec:6}

In this paper, we developed a new coordinated learning framework for the design of distributed MPC schemes (DMPC) invoking dual composition assuming that the local MPC models are unable to capture the real multi-agent system accurately. The proposed method leveraged the machinery of multi-agent Bayesian optimization leading to a coordinated framework for learning the parameterized DMPC aiming at improving the local closed-loop performance. We observed significant improvements in allowing each agent to refine its performance in a coordinated manner so that the global task the multi-agent system is assigned to is satisfied.

\bibliographystyle{IEEEtran}

\bibliography{IEEEabrv,ref}

\end{document}